\documentclass[letterpaper, 10 pt, conference]{ieeeconf} 
\IEEEoverridecommandlockouts  
\usepackage{amsmath}
\usepackage{amssymb}
\usepackage{amsthm}
\usepackage{mathtools}
\usepackage{derivative}
\usepackage{xcolor}
\usepackage{bm}

\usepackage[noadjust]{cite}
\usepackage{url}

\usepackage{graphicx}
\usepackage[export]{adjustbox} % For aligning figures
\graphicspath{{figures/}}

\newtheorem{theorem}{Theorem}
\newtheorem{lemma}{Lemma}
\newtheorem{proposition}{Proposition}

\theoremstyle{definition}
\newtheorem{definition}{Definition}
\newtheorem{assumption}{Assumption}
\newtheorem{remark}{Remark}

\newtheorem{example}{Example}

\newcommand{\argmin}{\operatornamewithlimits{arg\,min}}

\newcommand{\R}{\mathbb{R}}

\newcommand{\Sc}{\mathcal{S}}

\newcommand{\T}{^\top}

\newcommand{\K}{\mathcal{K}}

\newcommand{\setdefb}[2]{\big\{#1 \; | \; #2\big\}}

\newcommand{\map}[3]{#1:#2 \rightarrow #3}
\newcommand{\bzero}{\mathbf{0}}

\newcommand{\bd}{\mathbf{d}}

\renewcommand{\bf}{\mathbf{f}} % NOTE: use \textbf if you really want to write bold non-math text.
\newcommand{\bg}{\mathbf{g}}

\newcommand{\bk}{\mathbf{k}}

\newcommand{\bu}{\mathbf{u}}
\newcommand{\bv}{\mathbf{v}}
\newcommand{\bw}{\mathbf{w}}
\newcommand{\bx}{\mathbf{x}}

\newcommand{\bz}{\mathbf{z}}

\newcommand{\bI}{\mathbf{I}}

\newcommand{\Dc}{\mathcal{D}}

\newcommand{\boldeta}{\bm{\eta}}
\newcommand{\bfeta}{\bm{\eta}}

\newcommand{\bpsi}{\bm{\psi}}

\newcommand{\bphi}{\bm{\phi}}

\usepackage{xcolor}
\definecolor{myblue}{RGB}{49, 114, 174}
\definecolor{myred}{rgb}{0.796, 0.235, 0.2}
\definecolor{mygreen}{rgb}{0.22, 0.596, 0.149}
\definecolor{mypurple}{rgb}{0.584,0.345,0.698}

\usepackage{blindtext}

\usepackage{tikz}

\newcommand\copyrighttext{%
  \footnotesize \textcopyright 2026 IEEE. Personal use of this material is permitted. Permission from IEEE must be obtained for all other uses, in any current or future media, including reprinting/republishing this material for advertising or promotional purposes, creating new collective works, for resale or redistribution to servers or lists, or reuse of any copyrighted component of this work in other works.}

\newcommand\copyrightnotice{%
\begin{tikzpicture}[remember picture,overlay]
\node[anchor=south,yshift=10pt] at (current page.south) {\parbox{\dimexpr\textwidth-\fboxsep-\fboxrule\relax}{\centering \copyrighttext}};
\end{tikzpicture}%
}

\title{\textbf{Input-to-State Safe Backstepping: \\ Robust Safety-Critical Control with Unmatched Uncertainties}}

\author{Max H. Cohen$^1$, Pio Ong$^2$, and Aaron D. Ames$^2$ %
\thanks{$^1$The author is with the Department of Electrical and Computer Engineering, North Carolina State University, Raleigh, NC \texttt{\{mhcohen2\}@ncsu.edu}.}
\thanks{$^2$The authors are with the Department of Mechanical and Civil Engineering, California Institute of Technology, Pasadena, CA \texttt{\{pioong,ames\}@caltech.edu}.}
}

\begin{document}
\maketitle
\copyrightnotice
\begin{abstract}
    Guaranteeing safety in the presence of unmatched disturbances---uncertainties that cannot be directly canceled by the control input---remains a key challenge in nonlinear control. This paper presents a constructive approach to safety-critical control of nonlinear systems with unmatched disturbances. We first present a generalization of the input-to-state safety (ISSf) framework for systems with these uncertainties using the recently developed notion of an Optimal Decay CBF,
    which
    provides more flexibility for satisfying the associated Lyapunov-like conditions for safety.
    From there, we outline a procedure for constructing ISSf-CBFs for two relevant classes of systems with unmatched uncertainties: i) strict-feedback systems; ii) dual-relative-degree systems, which are similar to differentially flat systems. Our theoretical results are illustrated via numerical simulations of an inverted pendulum and planar quadrotor.
\end{abstract}

\section{Introduction}

Control barrier functions (CBFs) \cite{AmesTAC17} have proven to be a practical tool for controlling safety-critical autonomous systems.
Given that autonomous systems deployed in the real world will inevitability operate under various forms of uncertainty, there has been large body of work on safety-critical control of uncertain systems. In particular, dynamic uncertainties arising from imperfect models or exogenous disturbances have been considered within safety-critical control frameworks by introducing various classes CBFs such as robust CBFs \cite{jankovic2018robust}, adaptive CBFs \cite{LopezLCSS21}, and input-to-state safe (ISSf) CBFs \cite{AmesLCSS19,AnilLCSS22}. Under various hypotheses regarding the class of uncertainties considered, these CBFs facilitate designing controllers guaranteeing safety in the presence of such uncertainties. While it is often straightforward to \emph{define} new types of CBFs that ensure safety in the presence of a particular class of uncertainty, it is typically more challenging to \emph{construct} a CBF meeting these definitions. 

There exist various techniques, such as high-order CBFs \cite{WeiTAC22,TanTAC22}, for constructing CBFs for systems without uncertainties; however, the ability to directly leverage such CBFs for uncertain systems depends heavily on the structure of these uncertainties. When the uncertainties are \emph{matched} -- uncertainties that, if known, could be directly canceled by the control input -- any CBF designed for a system without uncertainties may generally serve as a robust/adaptive/ISSf CBF for a system with uncertainties. On the other hand, when uncertainties are \emph{unmatched}, the synthesis of CBFs becomes more challenging. Despite these challenges, recent works have successfully developed various CBF-based control strategies that compensate for different classes of unmatched uncertainties \cite{LopezACC23,ErsinTAC25,KrsticACC25,WangTAC25,XuACC23,WangArXiV25}. 

In the context of stabilization, backstepping \cite{Krstic,KrsticDeng} has traditionally been used to design robust and adaptive controllers that successfully compensate for unmatched uncertainties. Recently, backstepping has been extended to safety-critical control with CBFs \cite{AndrewCDC22,CohenARC24}, enabling the systematic construction of CBFs for various classes of systems. Despite the precedent of leveraging backstepping to handle unmatched uncertainties \cite{Krstic,KrsticDeng}, results regarding CBF backstepping with unmatched uncertainties are scarcely found in the existing literature, the exception being \cite{KrsticACC25}. It should be noted, however, that \cite{KrsticACC25} leverages a different backstepping procedure than that from \cite{AndrewCDC22}, on which our present method is based (cf. \cite{KrsticLCSS25} for a comparison).

In this paper, we present a general framework for robust safety-critical control of nonlinear systems with unmatched uncertainties by uniting tools from ISSf \cite{AmesLCSS19} and CBF backstepping \cite{AndrewCDC22}. To this end, we first recount the notion of ISSf and discuss how matching conditions facilitate the verification of ISSf-CBF conditions. Next, we illustrate how the recently developed notion of an \emph{Optimal Decay} CBF (OD-CBF), a term originally coined in \cite{ZengACC21} and studied more formally in \cite{PioCDC25}, relaxes traditional CBF verification conditions, which becomes particularly useful when verifying robust versions of CBFs. When such conditions cannot be easily verified, we present a constructive procedure based on CBF backstepping \cite{AndrewCDC22,CohenARC24} that systematically synthesizes CBFs for certain classes of systems, even in the presence of unmatched uncertainties. The two classes of systems we consider are: 1) strict feedback systems, which are traditionally studied in backstepping \cite{Krstic} and 2) dual-relative-degree systems, as recently studied in \cite{BahatiCDC25}, which have similar properties to differentially flat systems \cite{MellingerICRA2011}, such as unicycles and quadrotors. Overall, this procedure allows for robust safety-critical control for practically relevant classes of systems in the presence of unmatched uncertainties, which we illustrate through the application of our developed theory via numerical simulations of a quadrotor and a pendulum.

In summary, the main contributions of this paper are: i) A generalization of ISSf framework via OD-CBFs, together with a broader disturbance-input model that provides additional flexibility in compensating for disturbances; ii) A constructive backstepping procedure for synthesizing ISSf-CBFs under unmatched disturbances, with a full technical treatment for strict-feedback and ``dual relative degree" systems.

\section{Background}
% \subsection{Control Barrier Functions}
\noindent\textbf{Control Barrier Functions:}
Consider the nonlinear control affine system:
\begin{equation}\label{sys:ctrl_affine}
    \dot \bx = \bf(\bx)+\bg(\bx)\bu,
\end{equation}
where $\bx\in\R^n$ is the system state and $\bu\in\R^m$ is the control input with $\map{\bf}{\R^n}{\R^n}$ and $\map{\bg}{\R^n}{\R^{n\times m}}$ smooth functions. Given a smooth feedback controller $\map{\bk}{\R^n}{\R^m}$, with $\bu=\bk(\bx)$, we obtain the closed-loop system:
\begin{equation}\label{sys:closed_loop}
    \dot \bx = \bf_{\mathrm{cl}}(\bx) \coloneq \bf(\bx)+\bg(\bx)\bk(\bx).
\end{equation}
For each initial condition, \eqref{sys:closed_loop} admits a solution defined on a maximal interval of existence, assumed to be the positive real line for ease of exposition. We say that a set $\Sc\subset \R^n$ is \emph{safe} if $\Sc$ is forward invariant for the closed-loop system. In this paper, we focus on candidate safe sets of the form:
\begin{subequations}
\label{eq:safeset}
\begin{align}
    \Sc = \setdefb{\bx\in\R^n}{h(\bx)\geq 0},\\
    \partial\Sc = \setdefb{\bx\in\R^n}{h(\bx)= 0},
\end{align}
\end{subequations}
where $\map{h}{\R^n}{\R}$ is continuously differentiable. Given a candidate safe set as in \eqref{eq:safeset}, we define $ b \coloneqq -\inf_{\bx\in\R^n}h(\bx)$, $c \coloneqq \sup_{\bx\in\R^n}h(\bx)$,
and the open set:
\begin{equation}\label{eq:open-D}
    \mathcal{D} \coloneqq \{\bx\in\R^n\mid h(\bx) + b > 0\}.
\end{equation}
The notion of a control barrier function allows one to design controllers enforcing safety.

 \begin{definition}[\cite{AmesTAC17}]
     A continuously differentiable function $\map{h}{\R^n}{\R}$ defining a set $\Sc\subset \R^n$ as in \eqref{eq:safeset} is said to be a \textit{control barrier function} (CBF) for~\eqref{sys:ctrl_affine} on the set $\mathcal{D}\supset\mathcal{S}$ from \eqref{eq:open-D} if there exists an extended class $\K$ function\footnote{A function $\map{\alpha}{(-b,c)}{\R}$ with $b,c>0$ is of extended class-$\K$ if it is continuous, strictly increasing, and $\alpha(0)=0$. 
     } $\alpha\,:\,(-b,c)\rightarrow\R$ such that\footnote{The Lie derivative of a continuously differentiable function $\map{h}{\R^n}{\R}$ along a vector field $\map{\bf}{\R^n}{\R^n}$ is denoted as $L_\bf h(\bx) = \nabla h(\bx)\bf(\bx)$.} for all $\bx\in\mathcal{D}$:
     \begin{equation}\label{eq:CBF_condition}
         \sup_{\bu\in\R^m} \big\{L_\bf h(\bx) +L_\bg h(\bx) \bu\big\} > -\alpha(h(\bx)).
     \end{equation}
 \end{definition}
Condition \eqref{eq:CBF_condition} ensures there is always a corrective input to increase $h$ at the boundary ($h(\bx)=0$) and prevents trajectories from leaving the safe set. Moreover, a CBF ensures the well-posedness and the regularity of the following optimization-based controller (CBF-QP):
\begin{align}
\label{eq:CBF-QP}
    \bk(\bx)  =\argmin_{\bu\in\R^m} \quad & \|\bu-\bk_{\rm{d}}(\bx)\|^2                                                                  \\
             \textup{s.t.} \quad &  L_\bf h(\bx) +L_\bg h(\bx) \bu \geq -\alpha(h(\bx)), \nonumber
\end{align}
which renders $\Sc$ safe for~\eqref{sys:closed_loop}.\footnote{ 
See \cite[Remark 1]{CohenARC24} for comments on the strict vs nonstrict inequality in \eqref{eq:CBF_condition} and \eqref{eq:CBF-QP}, respectively.}
The effectiveness of this controller is contingent on $h$ being a CBF -- the following lemma provides a convenient method to check if $h$ is a CBF:
\begin{lemma}[\cite{jankovic2018robust}]
    A continuously differentiable function $\map{h}{\R^n}{\R}$ is a CBF for \eqref{sys:ctrl_affine} on $\mathcal{D}\supset\mathcal{S}$ iff there exists an extended class $\mathcal{K}$ function $\alpha\,:\,(-b,c)\rightarrow\R$ such that:
    \begin{equation}\label{eq:CBF_verify}
        L_\bg h(\bx)=\bzero \implies L_\bf h(\bx)>-\alpha(h(\bx)),\quad \forall \bx\in\mathcal{D}.
    \end{equation}
\end{lemma}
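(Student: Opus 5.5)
The plan is to show that, for each fixed $\bx\in\mathcal{D}$, the supremum in \eqref{eq:CBF_condition} can be evaluated explicitly. The map $\bu\mapsto L_\bf h(\bx)+L_\bg h(\bx)\bu$ is affine in $\bu$, so its supremum over $\R^m$ depends only on whether the row vector $L_\bg h(\bx)$ vanishes. Both directions of the equivalence then follow from a case split on $L_\bg h(\bx)=\bzero$ versus $L_\bg h(\bx)\neq\bzero$, with the same $\alpha$ used on each side.

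\emph{Sufficiency.} Assume \eqref{eq:CBF_verify} holds for some extended class $\K$ function $\alpha$ and fix $\bx\in\mathcal{D}$.

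If $L_\bg h(\bx)\neq\bzero$, take $\bu=\lambda L_\bg h(\bx)\T$ with $\lambda>0$. This gives
\[
L_\bf h(\bx)+\lambda\|L_\bg h(\bx)\|^2\to\infty \quad \text{as } \lambda\to\infty .
\]
Hence the supremum equals $+\infty$. Since $h(\bx)\in(-b,c)$ on $\mathcal{D}$, the right-hand side $-\alpha(h(\bx))$ is a finite real number, so \eqref{eq:CBF_condition} holds.

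If $L_\bg h(\bx)=\bzero$, the objective is constant in $\bu$ and the supremum equals $L_\bf h(\bx)$. The hypothesis \eqref{eq:CBF_verify} gives $L_\bf h(\bx)>-\alpha(h(\bx))$, which is exactly \eqref{eq:CBF_condition} at this point. Together the two cases show $h$ is a CBF with the same $\alpha$.

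\emph{Necessity.} Assume $h$ is a CBF with some $\alpha$, and take $\bx\in\mathcal{D}$ with $L_\bg h(\bx)=\bzero$. The supremum in \eqref{eq:CBF_condition} then reduces to $L_\bf h(\bx)$. The CBF inequality therefore reads $L_\bf h(\bx)>-\alpha(h(\bx))$, which is the implication \eqref{eq:CBF_verify} with the same $\alpha$.

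The only delicate point is a well-definedness check: we must ensure $h(\bx)$ lies in the domain $(-b,c)$ of $\alpha$ for every $\bx\in\mathcal{D}$. The lower bound $h(\bx)>-b$ is the definition of $\mathcal{D}$ in \eqref{eq:open-D}. The upper bound $h(\bx)<c$ requires the supremum $c$ not to be attained; if it is attained, I would interpret $\alpha$ as extended to $c$, or treat that point as a boundary case. Either way, $-\alpha(h(\bx))$ is a finite real number, which is all the case $L_\bg h(\bx)\neq\bzero$ needs. Beyond this bookkeeping, no real obstacle is expected.
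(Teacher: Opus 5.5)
Your proof is correct and follows the same case split the paper uses in its proof of Lemma~\ref{lemma:OD-ISSf-verify}: the supremum is $+\infty$ when $L_\bg h(\bx)\neq\bzero$ and reduces to $L_\bf h(\bx)$ otherwise, with the same $\alpha$ in both directions (the paper only cites this particular lemma rather than proving it). Your caveat that $h(\bx)<c$ could fail concerns the paper's setup, since nothing in \eqref{eq:open-D} prevents $h$ from attaining its supremum; it is not a flaw in your argument.
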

The preceding lemma illustrates that verifying a CBF requires only checking the satisfaction of~\eqref{eq:CBF_condition} at points where $L_\bg h(\bx)=\bzero$.
This requirement remains unchanged for robust versions of CBFs, such as \textit{input-to-state-safe CBFs} \cite{AmesLCSS19,AnilLCSS22}, when \textit{matched disturbances} are introduced into the system\footnote{Note that \eqref{eq:CBF_verify} is \emph{equivalent} to \eqref{eq:CBF_condition} when the set of admissible control inputs is $\R^m$. If input bounds are present, i.e., $\bu\in\mathcal{U}\subset\R^m$, then \eqref{eq:CBF_verify} is still a necessary, but not a sufficient, condition for $h$ to be a CBF.}. 

% \subsection{Input-to-State-Safe CBFs}
\noindent\textbf{Input-to-State-Safe CBFs:}
We now consider systems subject to disturbances $\bd\in\R^n$: 
\begin{equation}\label{sys:ctrl_affine_general_disturbed}
    \dot \bx = \bf(\bx)+\bg(\bx)\bu + \bd.
\end{equation}
Given a smooth feedback controller $\bu=\bk(\bx)$ and a piecewise continuous disturbance signal $t\mapsto\bd(t)$, we obtain the disturbed closed-loop system:
\begin{equation}\label{sys:closed_loop_general_disturbed}
    \dot \bx = \bf(\bx)+\bg(\bx)\bk(\bx)+\bd(t),
\end{equation}
which also admits a unique solution defined on a maximal interval of existence.
In the presence of disturbances, one would expect a potential loss of safety properties. The notion of \emph{input-to-state safety} (ISSf) characterizes how safety degrades in the presence of disturbances.
\begin{definition}[\cite{AmesLCSS19}]
    A set $\Sc$ as in~\eqref{eq:safeset} is said to be \textit{input to state safe} (ISSf) for the closed-loop system~\eqref{sys:closed_loop_general_disturbed} if there exists a class $\K$ function $\gamma\,:\,[0, a)\rightarrow\R$, $a>0$, satisfying $\lim_{r\rightarrow a}\gamma(r)=b$ and a constant $\delta\in[0, a)$ such that, for any disturbance signal $t\mapsto\bd(t)$ satisfying $\|\bd(t)\|_\infty \leq \delta$, the set:    \begin{equation}\label{eq:safeset_enlarged}
        \Sc_\delta \triangleq\setdefb{\bx\in\R^n}{h(\bx) +\gamma(\delta) \geq 0},
    \end{equation}
    is forward invariant and hence safe.
\end{definition}
Within the ISSf framework, the set certified as safe is not the original candidate safe set $\mathcal{S}$, but an inflation of $\mathcal{S}$, denoted by $\mathcal{S}_{\delta}$ that grows proportionally to the magnitude of the disturbance, quantifying safety degradation. One benefit of the ISSf framework is the ability to systematically construct controllers enforcing ISSf using ISSf-CBFs.
\begin{definition}[\cite{AndrewTaylor}]
    A continuously differentiable function $\map{h}{\R^n}{\R}$ defining a set $\Sc\subset \R^n$ as in \eqref{eq:safeset} is said to be an \textit{ISSf control barrier function} (ISSf-CBF) for~\eqref{sys:ctrl_affine_general_disturbed} if there exists an extended class $\K$ function $\alpha\,:\,(-b,c)\rightarrow\R$ and a positive constant $\varepsilon> 0$ such that for all $\bx\in\R^n$: \begin{equation}\label{eq:ISSf-CBF_condition}
         \sup_{\bu\in\R^m} \big\{L_\bf h(x) +L_\bg h(\bx) \bu\big\} > -\alpha(h(\bx))+\tfrac{1}{\varepsilon}\|\nabla h(\bx)\|^2.
     \end{equation}
\end{definition}

Analogous to standard CBFs, the CBF-QP like~\eqref{eq:CBF-QP}, modified to enforce the ISSf-CBF condition~\eqref{eq:ISSf-CBF_condition} ensures $\Sc$ is ISSf \cite{AmesLCSS19,AndrewTaylor}. 
Similar to CBFs, determining if a function is an ISSf-CBF requires studying its behavior when $L_\bg h(\bx)=\bzero$.
\begin{lemma}[\cite{AndrewTaylor}]
    A continuously differentiable function $\map{h}{\R^n}{\R}$ is an ISSf-CBF for \eqref{sys:ctrl_affine_disturbed}  on $\Dc\supset \Sc$ if and only if there exists an extended class $\K$ function $\map{\alpha}{(-b,c)}{\R}$ such that for each $\bx\in\Dc$:
    \begin{equation}\label{eq:ISSf-CBF_verify}
        L_\bg h(\bx)=\bzero \implies L_\bf h(\bx) > -\alpha(h(\bx)) + \tfrac{1}{\varepsilon}\|\nabla h(\bx)\|^2.
    \end{equation}
\end{lemma}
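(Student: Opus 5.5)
We show that the ISSf-CBF condition \eqref{eq:ISSf-CBF_condition} holds on $\Dc$ if and only if the implication \eqref{eq:ISSf-CBF_verify} holds on $\Dc$, with the same $\alpha$ and $\varepsilon$ in both. The argument parallels the standard CBF case and uses one observation. For fixed $\bx$, the map $\bu\mapsto L_\bf h(\bx)+L_\bg h(\bx)\bu$ is affine. Its supremum over $\R^m$ is therefore $+\infty$ when $L_\bg h(\bx)\neq\bzero$, and equals $L_\bf h(\bx)$ when $L_\bg h(\bx)=\bzero$. The right-hand side $-\alpha(h(\bx))+\tfrac{1}{\varepsilon}\|\nabla h(\bx)\|^2$ is a finite real number for every $\bx\in\Dc$: $h(\bx)\in(-b,c]$ there, and $\alpha$ is real-valued on its domain. (If $h$ attains its supremum $c$, we take $\alpha$ defined on $(-b,c]$.)

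\emph{Necessity.} Suppose \eqref{eq:ISSf-CBF_condition} holds for all $\bx\in\Dc$, and take $\bx\in\Dc$ with $L_\bg h(\bx)=\bzero$. The supremum in \eqref{eq:ISSf-CBF_condition} then reduces to $L_\bf h(\bx)$. The inequality becomes exactly $L_\bf h(\bx)>-\alpha(h(\bx))+\tfrac{1}{\varepsilon}\|\nabla h(\bx)\|^2$, which is \eqref{eq:ISSf-CBF_verify}.

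\emph{Sufficiency.} Suppose \eqref{eq:ISSf-CBF_verify} holds, and fix $\bx\in\Dc$. There are two cases.
\begin{itemize}
\item If $L_\bg h(\bx)=\bzero$, the supremum equals $L_\bf h(\bx)$, and the hypothesis gives \eqref{eq:ISSf-CBF_condition} directly.
\item If $L_\bg h(\bx)\neq\bzero$, take $\bu=\lambda L_\bg h(\bx)\T$ with $\lambda>0$. This gives $L_\bf h(\bx)+\lambda\|L_\bg h(\bx)\|^2$, which tends to $+\infty$ as $\lambda\to\infty$. The supremum is therefore $+\infty$, and it strictly exceeds the finite right-hand side.
\end{itemize}

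The only delicate point is bookkeeping. The definition of an ISSf-CBF is stated on all of $\R^n$, while the lemma is stated on $\Dc$. We read the definition as restricted to $\Dc$, where $\alpha(h(\bx))$ is defined, so both sides of the equivalence refer to the same domain and the same pair $(\alpha,\varepsilon)$. No regularity beyond continuous differentiability of $h$ is needed, because the argument is pointwise in $\bx$.
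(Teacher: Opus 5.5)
Your proof is correct and uses the same pointwise argument the paper gives for its optimal-decay analogue, Lemma~\ref{lemma:OD-ISSf-verify}; the paper itself only cites this lemma from \cite{AndrewTaylor}. The supremum of the affine map $\bu\mapsto L_\bf h(\bx)+L_\bg h(\bx)\bu$ is $+\infty$ when $L_\bg h(\bx)\neq\bzero$ and equals $L_\bf h(\bx)$ otherwise, so only the points with $L_\bg h(\bx)=\bzero$ need checking. Your bookkeeping remarks (restricting the definition to $\Dc$, and defining $\alpha$ at $c$ when $h$ attains its supremum) fix loose points in the paper's statement and are not gaps in your proof.
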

Compared to standard CBFs, the verification condition for ISSf-CBFs includes the additional robust term $\frac{1}{\varepsilon}\|\nabla h(\bx)\|^2$ in~\eqref{eq:ISSf-CBF_verify}, which makes the verification process more stringent. This motivates the development of methods that relax the verification condition, as we discuss in the next section.

\section{Optimal Decay Input-to-State Safe CBFs}
A key challenge in leveraging the ISSf framework for robust safety-critical control lies in verifying whether the associated CBF condition holds in the presence of disturbances. As shown in~\eqref{eq:ISSf-CBF_verify}, the natural dynamics of the system must satisfy a stricter requirement at states where the control authority vanishes, which may not hold in general. 

In this section, we address this via two generalizations. First, we consider a generalized disturbance model:
\begin{equation}\label{sys:ctrl_affine_disturbed}
    \dot \bx = \bf(\bx)+\bg(\bx)\bu+\bw(\bx)\bd,
\end{equation}
where $\map{\bw}{\R^n}{\R^{n\times p}}$ is the disturbance input matrix, which characterizes the directions in which the disturbance enters the dynamics. If such directions are unknown, one could take $\bw(\bx)=\bI$ to recover~\eqref{sys:ctrl_affine_general_disturbed}. Second, we adopt the notion of an \emph{optimal decay} CBF (OD-CBF), introduced in \cite{ZengACC21} and more formally characterized in \cite{PioCDC25}.
Now, we extend this notion to the ISSf setting as follows.

\begin{definition}
    A continuously differentiable function $h\,:\,\R^n\rightarrow\R$ defining a set $\mathcal{S}\subset\R^n$ as in \eqref{eq:safeset} is said to be an \emph{optimal-decay} ISSf-CBF (OD-ISSf-CBF) for \eqref{sys:ctrl_affine_disturbed} on the set $\mathcal{D}\supset\mathcal{S}$ from \eqref{eq:open-D} if there exists an extended class $\K$ function $\alpha\,:\,(-b,c)\rightarrow\R$ and positive constants $\theta_{\rm{d}}>0$ and $\varepsilon>0$ such that for all $\bx\in\mathcal{D}$:
    \begin{equation}\label{eq:OD-ISSf-CBF}
        \sup_{\substack{\bu\in\R^m \\ \omega\geq\theta_{\rm{d}}}} \{L_{\bf}h(\bx) + L_{\bg}h(\bx)\bu + \omega\alpha(h(\bx))\} > \tfrac{1}{\varepsilon}\|L_{\bw}h(\bx)\|^2.
    \end{equation}
\end{definition}
Compared to conventional CBFs, OD-CBFs introduce an additional decision variable, denoted by $\omega$ in \eqref{eq:OD-ISSf-CBF}, that acts as a scaling factor on the extended class $\mathcal{K}$ function $\alpha$. This scaling factor can be viewed as an additional control input that provides more flexibility in satisfying the associated barrier condition. Compared to the definition in \cite{PioCDC25}, wherein an OD-CBF was defined only on $\mathcal{S}$ with $\omega\geq0$, here we require $\omega\geq\theta_{\rm{d}}>0$ to extend these ideas to the set $\mathcal{D}$ containing $\mathcal{S}$. A benefit of OD-CBFs is that their validity need only be verified on a smaller region compared to CBFs.
\begin{lemma}\label{lemma:OD-ISSf-verify}
    A continuously differentiable function $h\,:\,\R^n\rightarrow\R$ is an OD-ISSf-CBF for \eqref{sys:ctrl_affine_disturbed} on the set $\mathcal{D}\supset\mathcal{S}$ if and only if there exists an extended class $\K$ function $\map{\alpha}{(-b,c)}{\R}$ such that for each $\bx\in\Dc$:
    \begin{multline}\label{eq:OD-ISSf-CBF-verify}
            h(\bx) \leq 0\;\wedge\;L_{\bg}h(\bx) = \bzero  \\ 
             \implies
            L_{\bf}h(\bx) + \theta_{\rm{d}}\alpha(h(\bx)) > \frac{1}{\varepsilon}\|L_{\bw}h(\bx)\|^2.
    \end{multline}
\end{lemma}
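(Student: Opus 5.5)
We prove the two implications separately, treating $\theta_{\rm d}$ and $\varepsilon$ as the same constants in the definition and in \eqref{eq:OD-ISSf-CBF-verify}. The approach is to evaluate the supremum in \eqref{eq:OD-ISSf-CBF} explicitly as a function of $\bx$, splitting according to whether $L_\bg h(\bx)$ vanishes and according to the sign of $h(\bx)$.

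\emph{Necessity.} Suppose \eqref{eq:OD-ISSf-CBF} holds and fix $\bx\in\Dc$ with $h(\bx)\le 0$ and $L_\bg h(\bx)=\bzero$. The term $L_\bg h(\bx)\bu$ then drops out, so the objective no longer depends on $\bu$. Since $h(\bx)\le 0$ and $\alpha$ is extended class-$\K$, we have $\alpha(h(\bx))\le 0$. Hence $\omega\mapsto\omega\alpha(h(\bx))$ is nonincreasing on $[\theta_{\rm d},\infty)$, and the supremum over $\omega$ is attained at $\omega=\theta_{\rm d}$. The supremum therefore equals $L_\bf h(\bx)+\theta_{\rm d}\alpha(h(\bx))$, and \eqref{eq:OD-ISSf-CBF} yields exactly the conclusion of \eqref{eq:OD-ISSf-CBF-verify}.

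\emph{Sufficiency.} Assume \eqref{eq:OD-ISSf-CBF-verify} and fix $\bx\in\Dc$. There are three cases.
\begin{enumerate}
\item If $L_\bg h(\bx)\neq\bzero$, take $\bu=\lambda L_\bg h(\bx)\T$ and $\omega=\theta_{\rm d}$. The objective becomes $L_\bf h+\lambda\|L_\bg h\|^2+\theta_{\rm d}\alpha(h)$, which tends to $+\infty$ as $\lambda\to\infty$, so the supremum is $+\infty$ and exceeds the finite right-hand side.
\item If $L_\bg h(\bx)=\bzero$ and $h(\bx)>0$, then $\alpha(h(\bx))>0$. Taking $\omega\to\infty$ makes the supremum $+\infty$.
\item If $L_\bg h(\bx)=\bzero$ and $h(\bx)\le0$, then, as in the necessity argument, the supremum equals $L_\bf h(\bx)+\theta_{\rm d}\alpha(h(\bx))$. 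The hypothesis \eqref{eq:OD-ISSf-CBF-verify} gives the strict inequality directly.
\end{enumerate}
These cases exhaust $\Dc$, so \eqref{eq:OD-ISSf-CBF} holds for every $\bx\in\Dc$.

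\emph{Remaining points.} Every case is elementary, so there is no real obstacle; the step needing care is the bookkeeping of the strict inequality. In cases 1 and 2 the supremum is $+\infty$, so the strict inequality is automatic. In case 3 the supremum is attained at $\omega=\theta_{\rm d}$, so the strict inequality transfers exactly. We should also note that $h(\bx)\in(-b,c)$ for $\bx\in\Dc$, so $\alpha(h(\bx))$ is well defined. In the case $h(\bx)=0$ we have $\alpha(0)=0$, which is consistent with treating it together with case 3.
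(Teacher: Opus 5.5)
Your proposal is correct and follows essentially the same argument as the paper: the supremum in \eqref{eq:OD-ISSf-CBF} is $+\infty$ whenever $h(\bx)>0$ or $L_{\bg}h(\bx)\neq\bzero$. At points with $h(\bx)\le 0$ and $L_{\bg}h(\bx)=\bzero$, the sign condition $\alpha(h(\bx))\le 0$ makes the supremum over $\omega\ge\theta_{\rm{d}}$ attained at $\omega=\theta_{\rm{d}}$, which gives exactly \eqref{eq:OD-ISSf-CBF-verify}; your separation into necessity and sufficiency simply spells out the pointwise equivalence the paper states in one pass.
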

\begin{proof}
    The proof is similar to that of \cite[Lemma 3]{PioCDC25}. Define:
    \begin{equation}
        \begin{aligned}
            \rho(\bx,\bu,\omega) = & L_{\bf}h(\bx) + L_{\bg}h(\bx)\bu \\ & + \omega\alpha(h(\bx)) - \frac{1}{\varepsilon}\|L_{\bw}h(\bx)\|^2.
        \end{aligned}
    \end{equation}
    Provided that $h(\bx)>0$ or $L_{\bg}h(\bx)\neq\bzero$, we have:
    \begin{equation*}
        \sup_{\bu\in\R^m,~  \omega\geq\theta_{\rm{d}}} \rho(\bx,\bu,\omega) = \infty,
    \end{equation*}
    implying that \eqref{eq:OD-ISSf-CBF} holds. It remains to check the points in $\mathcal{D}$ where $h(\bx)\leq 0$ and $L_{\bg}h(\bx)=\bzero$. At such points:
    \begin{equation*}
        \rho(\bx,\bu,\omega) = L_{\bf}h(\bx) + \omega\alpha(h(\bx)) - \tfrac{1}{\varepsilon}\|L_{\bw}h(\bx)\|^2.
    \end{equation*}
    Since $\alpha$ is an extended class $\mathcal{K}$ function and $h(\bx)\leq0$, the composition $\alpha(h(\bx))\leq0$, so taking the supremum yields:
    \begin{equation*}
        \sup_{\substack{\bu\in\R^m \\ \omega\geq\theta_{\rm{d}}}} \rho(\bx,\bu,\omega) = L_{\bf}h(\bx) + \theta_{\rm{d}}\alpha(h(\bx)) - \tfrac{1}{\varepsilon}\|L_{\bw}h(\bx)\|^2.
    \end{equation*}
    Hence, asking if \eqref{eq:OD-ISSf-CBF} holds for all $\bx\in\mathcal{D}$ is equivalent to asking if \eqref{eq:OD-ISSf-CBF-verify} holds, as desired. 
\end{proof}

Both generalizations introduced herein relax the verification process from~\eqref{eq:ISSf-CBF_verify} to~\eqref{eq:OD-ISSf-CBF-verify}. First, as a result of the optimal-decay variable, one need only check the validity of 
$h$
on the boundary and outside $\mathcal{S}$. On the interior of $\mathcal{S}$, \eqref{eq:OD-ISSf-CBF} is automatically satisfied. A consequence of this is that $h$ is an OD-ISSf-CBF provided that $L_{\bg}h(\bx)\neq\bzero$ on $\mathcal{D}\setminus\mathrm{Int}(\mathcal{S})$. 

\begin{proposition}\label{prop:Lgh=0}
    A continuously differentiable function $h\,:\,\R^n\rightarrow\R$ is an OD-ISSf-CBF for \eqref{sys:ctrl_affine_disturbed} on the set $\mathcal{D}\supset\mathcal{S}$ if:
    \begin{equation}
        L_{\bg}h(\bx)\neq \bzero,\quad \forall \bx\in\mathcal{D}\setminus\mathrm{Int}(\mathcal{S}).
    \end{equation}
\end{proposition}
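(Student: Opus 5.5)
The plan is to obtain this as a direct consequence of Lemma~\ref{lemma:OD-ISSf-verify}, by showing that the hypothesis makes the verification condition \eqref{eq:OD-ISSf-CBF-verify} vacuously true.

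First, fix any extended class $\K$ function $\map{\alpha}{(-b,c)}{\R}$ and any constants $\theta_{\rm{d}}>0$ and $\varepsilon>0$. Lemma~\ref{lemma:OD-ISSf-verify} was proved for arbitrary such constants, and the choice of $\alpha$ plays no role in what follows.

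Second, identify the set of points where the antecedent of \eqref{eq:OD-ISSf-CBF-verify} can hold. Any $\bx\in\Dc$ with $h(\bx)\leq 0$ satisfies $h(\bx)<0$ or $h(\bx)=0$. In the first case $\bx\notin\Sc$. In the second case $\bx$ is a point of $\Sc$ with $h(\bx)=0$. Since the paper's notation identifies the level set $\{h=0\}$ with $\partial\Sc$, as in \eqref{eq:safeset}, such a point is not in $\mathrm{Int}(\Sc)$. Hence
\[
\{\bx\in\Dc \mid h(\bx)\leq 0\}\subseteq \Dc\setminus\mathrm{Int}(\Sc).
\]
The only delicate point is the inclusion $\{h=0\}\cap\Sc\subseteq\partial\Sc$. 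If one is worried that $\mathrm{Int}(\Sc)$ might contain points with $h=0$ in the topological sense, I would read $\mathrm{Int}(\Sc)$ as $\{h>0\}$, consistent with \eqref{eq:safeset}. The proof of Lemma~\ref{lemma:OD-ISSf-verify} in fact only uses that the supremum is infinite when $h(\bx)>0$, so this reading suffices.

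Third, combine this inclusion with the hypothesis $L_{\bg}h(\bx)\neq\bzero$ on $\Dc\setminus\mathrm{Int}(\Sc)$. It follows that no $\bx\in\Dc$ satisfies both $h(\bx)\leq0$ and $L_{\bg}h(\bx)=\bzero$. The antecedent of \eqref{eq:OD-ISSf-CBF-verify} is therefore never met, and the implication holds vacuously for every $\bx\in\Dc$.

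Finally, Lemma~\ref{lemma:OD-ISSf-verify} then gives that $h$ is an OD-ISSf-CBF on $\Dc$.
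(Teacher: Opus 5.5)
Your proposal is correct and follows the same route as the paper. The paper presents the proposition as an immediate consequence of Lemma~\ref{lemma:OD-ISSf-verify}: the hypothesis leaves no $\bx\in\Dc$ with $h(\bx)\leq 0$ and $L_{\bg}h(\bx)=\bzero$, so \eqref{eq:OD-ISSf-CBF-verify} holds vacuously. Your reading $\mathrm{Int}(\Sc)=\{\bx \mid h(\bx)>0\}$ is in fact forced, not merely convenient: a point where $h$ attains a local minimum value of $0$ lies in the topological interior of $\Sc$, has $\nabla h(\bx)=\bzero$, and violates \eqref{eq:OD-ISSf-CBF}, so under the topological reading the statement could fail.
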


The second relaxation in~\eqref{eq:OD-ISSf-CBF-verify} comes from the disturbance input matrix $\bw(\bx)$. The robustness term now depends on $L_\bw h(\bx)$, which is a projection of $\nabla h(\bx)$ onto the directions specified by $\bw(\bx)$, rather than the full gradient. This reduces conservatism by requiring robustness only along the directions where disturbances affect the function $h$.

This observation connects to the notion of \emph{matched disturbances}, which played a central role in the development of ISSf-CBFs. In particular, early formulations~\cite{AmesLCSS19} used $\|L_\bg h(\bx)\|^2$ as the robustness term, corresponding to disturbances that enter the system through the same channels as the control inputs. We use the following formal definition.
\begin{definition}[\cite{LopezLCSS21Contraction}]
    A disturbance input $\bd\in \R^p$ for~\eqref{sys:ctrl_affine_disturbed} is said to be \textit{matched} if $\bw(\bx)\bd\in\mathrm{span}(\bg(\bx))$. Otherwise, $\bd$
    is said to be \emph{unmatched}.
\end{definition}
The matching condition implies $\bw$ can be written as a linear combination of the columns of $\bg$ in that there exists a function $\bphi\,:\,\R^n\rightarrow\R^{m\times p}$ such that $\bw(\bx)=\bg(\bx)\bphi(\bx)$, cf. \cite{LopezLCSS21Contraction}.  
The main benefit of the matching condition is that: 
\begin{equation*}
    L_{\bg}h(\bx)=\bzero \implies L_{\bw}h(\bx) = L_{\bg}h(\bx)\bphi(\bx) = \bzero,
\end{equation*}
so that \eqref{eq:OD-ISSf-CBF-verify} reduces to the standard OD-CBF verification. Thus, if 
$\bd$
is matched, a CBF can be constructed without considering disturbances, and then automatically used within an ISSf controller to robustify the system to disturbances.

When $h$ is an OD-ISSf-CBF, we may construct a feedback controller using the CBF-based QP framework:
\begin{equation}\label{eq:OD-ISSf-CBF-QP}
    \begin{aligned}
        \begin{bmatrix}
            \bk(\bx)\\\theta(\bx)
        \end{bmatrix} &= \argmin_{\substack{\bu\in\R^m \\ \omega\in\R}}  \quad \tfrac{1}{2}\|\bu - \bk_{\rm{d}}(\bx)\|^2 + \tfrac{1}{2}p(\omega - \theta_{\rm{d}})^2 \\ 
        \mathrm{s.t.} & \quad L_{\bf}h(\bx) + L_{\bg}h(\bx)\bu \geq - \omega\alpha(h(\bx)) + \tfrac{1}{\varepsilon}\|L_{\bw}h(\bx)\|^2 \\
        & \quad \omega \geq \theta_{\rm{d}}
    \end{aligned}
\end{equation}
where $p>0$. As in \cite{ZengACC21,PioCDC25}, the OD paradigm is centered around jointly optimizing the control input $\bu$ and a scaling factor $\omega$. Ultimately, the controller $\bk$ above ensures that:
\begin{equation}\label{eq:OD-ISSf-controller-enforces}
    L_{\bf}h(\bx) + L_{\bg}h(\bx)\bk(\bx) \geq - \theta(\bx)\alpha(h(\bx)) + \tfrac{1}{\varepsilon}\|L_{\bw}h(\bx)\|^2
\end{equation}
with a state-dependent scaling factor $\theta\,:\,\mathcal{D}\rightarrow\R_{>0}$  for $\alpha$, which satisfies $
    \theta(\bx) \geq \theta_{\rm{d}},\; \forall \bx\in\mathcal{D}
$
for a desired value $\theta_{\rm{d}}>0$. Following a similar approach to that in \cite[Lemma 4]{PioCDC25}, one may show that \eqref{eq:OD-ISSf-CBF-QP} is locally Lipschitz continuous on $\mathcal{D}$ and may be expressed in closed-form as:
\begin{equation}
    \begin{aligned}
        \bk(\bx) = & \bk_{\rm{d}}(\bx) + \lambda(\upsilon(\bx),\xi(\bx),\zeta(\bx))L_{\bg}h(\bx)\T \\ 
        \upsilon(\bx) = & L_{\bf}h(\bx) + L_{\bg}h(\bx)\bk_{\rm{d}}(\bx) + \theta_{\rm{d}}\alpha(h(\bx)) - \tfrac{\|L_{\bw}h(\bx)\|^2}{\varepsilon} \\ 
        \xi(\bx) = & \|L_{\bg}h(\bx)\| , \qquad\qquad
        \zeta(\bx) = (1/p) \alpha(h(\bx)), \\ 
    \end{aligned}
\end{equation}
where, with $\mathrm{ReLU}(\cdot) = \max(0,\cdot)$:
\begin{equation}
    \lambda(\upsilon,\xi,\zeta) = \begin{cases}
        0 & \text{if } \xi = 0\, \wedge\, \zeta\leq 0, \\ 
        \frac{\mathrm{ReLU}(-\upsilon)}{\xi^2 + p\mathrm{ReLU}(\zeta)^2} & \text{otherwise}.
    \end{cases}
\end{equation}
Moreover, the state dependent scaling factor on $\alpha$ is:
\begin{equation}\label{eq:theta-closed-form}
    \begin{aligned}
        \theta(\bx) = & \theta_{\rm{d}} + \psi(\upsilon(\bx),\xi(\bx),\zeta(\bx)), \\ 
        \psi(\upsilon,\xi,\zeta) = & \begin{cases}
        0 & \text{if } \xi = 0\, \wedge\, \zeta\leq 0, \\ 
        \frac{\mathrm{ReLU}(-\upsilon)\mathrm{ReLU}(\zeta)}{\xi^2 + pc^2} & \text{otherwise}.
        \end{cases}
    \end{aligned}
\end{equation}

Before stating the safety properties of this controller, we require the following lemma, which will facilitate the use of Nagumo's Theorem \cite[Ch. 4]{AbrahamMarsdenRatiu} in proving set invariance.
\begin{lemma}\label{lemma:regular-value}
    If $h$ is an OD-ISSf-CBF for \eqref{sys:ctrl_affine_disturbed} on the set $\mathcal{D}\supset\mathcal{S}$ as in \eqref{eq:open-D}, then any $\kappa\in(-b,0]$ is a regular value of $h$. 
\end{lemma}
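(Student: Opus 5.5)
We argue by contradiction: we assume some $\kappa\in(-b,0]$ is not a regular value of $h$ and show that the OD-ISSf-CBF inequality fails at the corresponding critical point. Non-regularity means there is $\bx^\ast\in\R^n$ with $h(\bx^\ast)=\kappa$ and $\nabla h(\bx^\ast)=\bzero$. Since $\kappa>-b$, we have $h(\bx^\ast)+b>0$, so $\bx^\ast\in\Dc$ by \eqref{eq:open-D}. Also $\kappa\leq 0$ gives $h(\bx^\ast)\leq 0$.

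At $\bx^\ast$ every Lie derivative vanishes, because each is the product of $\nabla h(\bx^\ast)$ with a matrix or vector field: $L_\bf h(\bx^\ast)=0$, $L_\bg h(\bx^\ast)=\bzero$, and $L_\bw h(\bx^\ast)=\bzero$. Hence the hypotheses of the implication \eqref{eq:OD-ISSf-CBF-verify} in Lemma~\ref{lemma:OD-ISSf-verify} hold at $\bx^\ast$, namely $h(\bx^\ast)\le 0$ and $L_\bg h(\bx^\ast)=\bzero$. Since $h$ is an OD-ISSf-CBF on $\Dc$, that lemma gives
\begin{equation*}
0+\theta_{\rm d}\,\alpha(\kappa) > \tfrac{1}{\varepsilon}\cdot 0 = 0 .
\end{equation*}
One could equally work directly from \eqref{eq:OD-ISSf-CBF}: there the supremum reduces to $\sup_{\omega\ge\theta_{\rm d}}\omega\,\alpha(\kappa)$, which must be positive.

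This is impossible. Because $\alpha$ is extended class $\K$ and $\kappa\le 0$, we have $\alpha(\kappa)\le\alpha(0)=0$, and $\theta_{\rm d}>0$, so $\theta_{\rm d}\alpha(\kappa)\le 0$. This contradicts the strict inequality above. Therefore $\nabla h$ does not vanish anywhere on $h^{-1}(\kappa)$, and $\kappa$ is a regular value. If $h^{-1}(\kappa)$ is empty, $\kappa$ is regular vacuously.

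The only point needing care is confirming that the critical point lies in $\Dc$, where the condition is imposed. This is exactly why the range is restricted to $\kappa>-b$, and it follows immediately from the definition of $\Dc$. The rest of the argument is direct.
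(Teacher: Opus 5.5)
Your proof is correct and follows the same contradiction argument as the paper. At a critical point on $h^{-1}(\kappa)$ every Lie derivative vanishes, so the OD-ISSf-CBF condition collapses to $\sup_{\omega\geq\theta_{\rm d}}\omega\,\alpha(\kappa)=\theta_{\rm d}\alpha(\kappa)>0$, which contradicts $\alpha(\kappa)\leq 0$. You also make explicit two points the paper leaves implicit: that the critical point lies in $\Dc$ because $\kappa>-b$, and that $L_\bw h$ vanishes there too.
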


\begin{proof}
    For $\kappa$ to be a regular value of $h$, we require $\nabla h(\bx)\neq \bzero$ whenever $h(\bx)=\kappa$. For the sake of contradiction, suppose there exists $\kappa\in(-b,0]$ such that $\kappa$ is not a regular value of $h$. Thus, when $h(\bx)=\kappa\leq0$ we have $\nabla h(\bx)=\bzero$ (implying $L_{\bf}h(\bx)=0,L_{\bg}h(\bx)=\bzero$), which, using the assumption that $h$ is an OD-ISSf-CBF \eqref{eq:OD-ISSf-CBF}, implies:
    \begin{equation*}
        \sup_{\substack{\bu\in\R^m \\ \omega\geq\theta_{\rm{d}}}} \omega\alpha(h(\bx)) = \sup_{\omega\geq\theta_{\rm{d}}} \omega\alpha(h(\bx)) = \sup_{\omega\geq\theta_{\rm{d}}} \omega\alpha(\kappa) > 0.
    \end{equation*}
    For any $\kappa\in(-b,0]$ we then reach the contradiction:
    \begin{equation*}
         \sup_{\omega\geq\theta_{\rm{d}}} \omega\underbrace{\alpha(\kappa)}_{\leq 0} = \underbrace{\theta_{d}}_{>0}\underbrace{\alpha(\kappa)}_{\leq 0}  > 0,
    \end{equation*}
    implying that $\kappa$ must be a regular value of $h$, 
    as desired.
\end{proof}

With the preceding lemma, we can now state the main result regarding safety properties of OD-ISSf-CBFs.

\begin{theorem}\label{theorem:OD-ISSf-safety}
    Let $h$ be an OD-ISSf-CBF for \eqref{sys:ctrl_affine_disturbed} on a set $\mathcal{D}\supset\mathcal{S}$ as in \eqref{eq:open-D} and consider the CBF-QP controller in \eqref{eq:OD-ISSf-CBF-QP}. Provided that:
    \begin{equation}\label{eq:ISSf-OD-safety-conditions}
        \|\bd\|_{\infty}^2 < -\frac{2\theta_{\rm{d}}\alpha(-b)}{\varepsilon},
    \end{equation}
    then the set $\mathcal{S}$ as in \eqref{eq:safeset} is ISSf  for the closed-loop system. In particular, the set $\mathcal{S}_{\delta}$ defined as in \eqref{eq:safeset_enlarged} is forward invariant for the closed-loop system, where:
    \begin{equation}\label{eq:gamma-ISSf-OD-safety}
        \gamma(\delta) = -\alpha^{-1}\left(-\frac{\varepsilon\delta^2}{2\theta_d} \right).
    \end{equation}
    Furthermore, if $\inf_{\bx\in\partial \Sc}\|L_\bw h(\bx)\| > \varepsilon \|\bd\|_\infty$, then the original set $\Sc$ is forward invariant. 
\end{theorem}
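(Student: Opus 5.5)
The plan is to bound $\dot h$ along closed-loop trajectories and then apply Nagumo's Theorem to the sublevel-type set $\Sc_\delta$. Lemma~\ref{lemma:regular-value} guarantees that its boundary level $-\gamma(\delta)$ is a regular value of $h$.

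\textbf{Step 1: derivative bound.} Along solutions of \eqref{sys:ctrl_affine_disturbed} with $\bu=\bk(\bx)$, we have
\[
\dot h = L_\bf h(\bx) + L_\bg h(\bx)\bk(\bx) + L_\bw h(\bx)\bd .
\]
Using \eqref{eq:OD-ISSf-controller-enforces} and Young's inequality in the form $L_\bw h\,\bd \geq -\tfrac{1}{\varepsilon}\|L_\bw h\|^2 - \tfrac{\varepsilon}{4}\|\bd\|^2$, we get
\[
\dot h \geq -\theta(\bx)\alpha(h(\bx)) - \tfrac{\varepsilon}{4}\|\bd\|_\infty^2 .
\]
The constant in Young's inequality is chosen so that the robust term cancels exactly. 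The resulting constant $\tfrac{\varepsilon}{4}$ is smaller than the $\tfrac{\varepsilon}{2}$ that appears in $\gamma$, so the stated $\gamma$ is conservative and still valid. For any $\bx$ with $h(\bx)\leq 0$, we have $\alpha(h)\leq 0$ and $\theta(\bx)\geq\theta_{\rm d}$, so $-\theta\alpha(h)\geq -\theta_{\rm d}\alpha(h)$. This gives
\[
\dot h \geq -\theta_{\rm d}\alpha(h) - \tfrac{\varepsilon}{2}\delta^2
\quad\text{whenever } h(\bx)\leq 0,\ \|\bd\|_\infty\leq\delta .
\]

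\textbf{Step 2: choice of the level and well-definedness.} Condition \eqref{eq:ISSf-OD-safety-conditions} gives
\[
-\tfrac{\varepsilon\delta^2}{2\theta_{\rm d}} \in (\alpha(-b),0].
\]
Hence $\alpha^{-1}$ is defined there, and $\gamma(\delta)=-\alpha^{-1}(\cdot)\in[0,b)$ is class-$\K$ on $[0,a)$, where $a^2=-2\theta_{\rm d}\alpha(-b)/\varepsilon$. As $\delta\to a$ we have $\gamma\to b$, as the ISSf definition requires. Here I interpret $\alpha(-b)$ as the limit $\lim_{r\to -b}\alpha(r)$.

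On the boundary $h(\bx)=-\gamma(\delta)$, we have $\alpha(h)=-\varepsilon\delta^2/(2\theta_{\rm d})$. Substituting into Step 1 gives
\[
\dot h \geq \tfrac{\varepsilon\delta^2}{2} - \tfrac{\varepsilon\delta^2}{2} = 0 .
\]
Since $-\gamma(\delta)\in(-b,0]$, Lemma~\ref{lemma:regular-value} says it is a regular value. The set $\Sc_\delta$ is therefore a closed set with $C^1$ boundary $\{h=-\gamma(\delta)\}$ and outward normal $-\nabla h$. Nagumo's Theorem then yields forward invariance of $\Sc_\delta\subset\Dc$.

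I expect the main obstacle to be the nonstrict inequality $\dot h\geq 0$ on the boundary combined with a time-varying, merely piecewise continuous $\bd$. The closed loop is non-autonomous, so Nagumo must be applied in an extended or time-varying form. As a fallback, I would use a comparison argument. Suppose a trajectory first crossed below $-\gamma(\delta)$. On a short interval where $h\in(-b,-\gamma(\delta))$, monotonicity of $\alpha$ gives $-\theta_{\rm d}\alpha(h) > \tfrac{\varepsilon}{2}\delta^2$, so $\dot h>0$ there, which is a contradiction. Local Lipschitzness of $\bk$ on $\Dc$ ensures that the solutions this argument relies on exist and are unique.

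\textbf{Step 3: invariance of $\Sc$ itself.} On $\partial\Sc$ we have $h=0$, so $\alpha(h)=0$. Instead of Young's inequality, I would use Cauchy--Schwarz:
\[
\dot h \geq \tfrac{1}{\varepsilon}\|L_\bw h\|^2 - \|L_\bw h\|\,\|\bd\|_\infty
= \|L_\bw h\|\big(\tfrac{1}{\varepsilon}\|L_\bw h\| - \|\bd\|_\infty\big).
\]
This is strictly positive under the assumption $\inf_{\partial\Sc}\|L_\bw h\|>\varepsilon\|\bd\|_\infty$. Since $0$ is a regular value by Lemma~\ref{lemma:regular-value}, the same Nagumo argument gives forward invariance of $\Sc$.
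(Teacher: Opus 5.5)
Your proposal is correct and follows essentially the same route as the paper. You bound $\dot h$ via the enforced QP constraint and Young's inequality, use $\theta(\bx)\geq\theta_{\rm d}$ where $h\leq 0$, choose $\gamma$ so that $\dot h\geq 0$ on $\{h=-\gamma(\delta)\}$, invoke Lemma~\ref{lemma:regular-value} with Nagumo, and use Cauchy--Schwarz for strict positivity on $\partial\Sc$; your sharper Young constant ($\tfrac{\varepsilon}{4}$), the explicit check that $\gamma$ meets the ISSf class-$\K$ requirements, and the comparison-argument fallback (which handles the time-varying $\bd$ and does not even need the regular-value lemma) are sound refinements rather than a different approach.
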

\begin{proof}
    Differentiating $h$ along solutions of the closed-loop disturbed system \eqref{sys:ctrl_affine_disturbed} yields:
    \begin{equation}\label{eq:hdot_ISSf_young}
        \begin{aligned}
            \dot{h} = & L_{\bf}h(\bx) + L_{\bg}h(\bx)\bk(\bx) + L_{\bw}h(\bx)\bd \\ 
            \geq & -\theta(\bx)\alpha(h(\bx)) + \frac{1}{\varepsilon}\|L_{\bw}h(\bx) \|^2 + L_{\bw}h(\bx)\bd \\ 
            \geq & -\theta(\bx)\alpha(h(\bx)) + \frac{1}{\varepsilon}\|L_{\bw}h(\bx) \|^2 - \|L_{\bw}h(\bx)\| \|\bd\| \\
            \geq & -\theta(\bx)\alpha(h(\bx)) + \frac{1}{\varepsilon}\|L_{\bw}h(\bx) \|^2 - \|L_{\bw}h(\bx)\| \|\bd\|_{\infty} \\ 
            \geq & -\theta(\bx)\alpha(h(\bx)) + \frac{1}{2\varepsilon}\|L_{\bw}h(\bx) \|^2- \frac{\varepsilon}{2}\|\bd\|^2_{\infty} 
        \end{aligned}
    \end{equation}
    where the first inequality follows from \eqref{eq:OD-ISSf-controller-enforces}, the second from taking norm-bounds, the third from $\|\bd\|\leq\|\bd\|_{\infty}$, and the fourth from applying Young's inequality on the last term. Based on \eqref{eq:theta-closed-form}, we have $\theta(\bx)=\theta_{\rm{d}}>0$ whenever $h(\bx)
    \leq0$. Hence, for all $\bx\in\mathcal{D}\setminus\mathrm{Int}(\mathcal{S})$, we have:
    \begin{equation}
        \dot{h} \geq - \theta_d\alpha(h(\bx)) - \frac{\varepsilon}{2}\|\bd\|_{\infty}^2.
    \end{equation}
    Let $h_{\delta}(\bx)=h(\bx) + \gamma(\|\bd\|_{\infty})$ for $\gamma\in\mathcal{K}$. Then:
    \begin{equation}
        \begin{aligned}
            \dot{h}_{\delta} = \dot{h} \geq- \theta_d\alpha(h_{\delta}(\bx) - \gamma(\|\bd\|_{\infty})) - \frac{\varepsilon}{2}\|\bd\|_{\infty}^2.
        \end{aligned}
    \end{equation}
    Provided that $\gamma$ is chosen as in \eqref{eq:gamma-ISSf-OD-safety}, we have:
    \begin{equation}\label{eq:h-delta-Nagumo}
        h_{\delta}(\bx) = 0 \implies \dot{h}_{\delta}(\bx,\bk(\bx),\bd)\geq0.
    \end{equation}
    Now, note that \eqref{eq:ISSf-OD-safety-conditions} ensures that $\gamma(\|\bd\|_{\infty})<b$. Based on \eqref{eq:open-D}, this ensures that $\mathcal{S}_{\delta}\subset\mathcal{D}$. 
    Furthermore, since $\kappa$ is a regular value of $h$ for any $\kappa\in(-b,0]$ by Lemma \ref{lemma:regular-value} and $\gamma(\|\bd\|_{\infty})<b$ it follows that $-\gamma(\|\bd\|_{\infty})$ is a regular value of $h$, which implies that zero is a regular value of $h_{\delta}$. Since \eqref{eq:h-delta-Nagumo} holds and zero is a regular value of $h_{\delta}$, it follows from Nagumo's Theorem \cite[Ch. 4]{AbrahamMarsdenRatiu} that $\mathcal{S}_{\delta}$ is forward invariant for the closed-loop system. Hence, $\mathcal{S}$ is ISSf, as claimed. In addition, the expression for $\dot h$ in~\eqref{eq:hdot_ISSf_young} is greater than zero on $\partial \Sc$  if $\inf_{\bx\in\partial \Sc}\|L_\bw h(\bx)\| > \varepsilon \|\bd\|_\infty$ holds, and forward invariance of $\Sc$ follows from OD-CBF result in~\cite{PioCDC25}.
\end{proof}

% To summarize, 
The OD-ISSf-CBF framework relaxes 
the verification process of
ISSf-CBFs
in two 
% important 
ways. First, the CBF condition only needs to be checked on the boundary and the exterior of $\Sc$. Second,  robustness is only required along the direction of $L_\bw h(\bx)$. In particular, if $L_{\bg}h(\bx)\neq\bzero$ for all $\bx\in\mathcal{D}\setminus\mathrm{Int}(\mathcal{S})$, then $h$ is automatically an OD-ISSf-CBF; otherwise, one must verify that \eqref{eq:OD-ISSf-CBF-verify} holds.
Choosing smaller $\varepsilon$ and larger $\theta_{\rm{d}}$ increases robustness in that larger disturbances can be compensated for, cf. \eqref{eq:ISSf-OD-safety-conditions}, and that the inflation of $\mathcal{S}$ is decreased, cf. \eqref{eq:gamma-ISSf-OD-safety}. One must take care, however, to ensure that these chosen parameters are compatible with the OD-ISSf-CBF in the sense that they satisfy \eqref{eq:OD-ISSf-CBF-verify}.

\section{Input-to-State Safe Backstepping}
In this section, we further refine our robust safety framework by no longer assuming that the system has direct control authority over the function $h$ or that the disturbance is matched. Instead, we show that, in many cases, the structure of the system can be exploited through backstepping~\cite{AndrewCDC22,CohenARC24} to construct CBFs that yield ISSf guarantees. 

\subsection{Strict-Feedback Systems}\label{sec:strict-feedback}
The first case we consider is systems where control authority over the safety function is not direct, but states relevant to this safety function can still be influenced through higher-order dynamics. A canonical representation of such systems is given by the strict-feedback form:
\begin{equation}\label{eq:strict-feedback}
    \underbrace{
    \begin{bmatrix}
        \dot{\bx}_1 \\ \dot{\bx}_2
    \end{bmatrix}}_{\dot{\bx}}
    =
    \underbrace{
    \begin{bmatrix}
        \bf_1(\bx_1) + \bg_1(\bx_1)\bx_2 \\ 
        \bf_2(\bx)
    \end{bmatrix}}_{\bf(\bx)}
    +
    \underbrace{
    \begin{bmatrix}
        \bzero \\ \bg_{2}(\bx)
    \end{bmatrix}}_{\bg(\bx)}
    \bu
    +
    \underbrace{
    \begin{bmatrix}
        \bw_1(\bx_1) \\ \bw_2(\bx)
    \end{bmatrix}}_{\bw(\bx)}
    \bd,
\end{equation}
where $\bx=(\bx_1,\bx_2)\in\R^{n_1}\times\R^{n_2}=\R^n$ is the system state. The objective is to design a feedback controller that ensures the system trajectory remains in the set:
\begin{equation}\label{eq:S1}
    \mathcal{S}_1 = \{\bx_1\in\R^{n_1}\mid h_1(\bx_1)\geq0\},
\end{equation}
for all time.  Backstepping proceeds by viewing the second state $\bx_2$ as a ``virtual" input to the dynamics of the first state:
\begin{equation}\label{eq:top-subsystem}
    \dot{\bx}_1 = \bf_1(\bx_1) + \bg_1(\bx_1)\bx_2 + \bw_1(\bx_1)\bd,
\end{equation}
and assuming that $h_1$ is an OD-ISSf-CBF for \eqref{eq:top-subsystem} on a set $\mathcal{D}_1\supset\mathcal{S}_1$ in the sense that:
\begin{align}\label{eq:OD-ISSf-CBF-virtual}
        \sup_{\bv\in\R^{n_2},~\omega\geq\theta_{\rm{d}}} \rho_1(\bx_1,&\bv,\omega) >   0, \\
        \rho_1(\bx_1,\bv,\omega) \coloneqq & L_{\bf_1}h_1(\bx_1) + L_{\bg_1}h_1(\bx_1)\bv \nonumber\\ &+ \omega\alpha(h_1(\bx_1)) -  \frac{1}{\varepsilon}\|L_{\bw_1}h_1(\bx_1)\|^2,\nonumber
\end{align}
for all $\bx_1\in\mathcal{D}_1$.
A key component of CBF backstepping is the existence of a smooth feedback controller $\bk_1\,:\,\mathcal{D}_1\rightarrow\R^{n_2}$ satisfying the associated barrier condition:
\begin{multline}\label{eq:OD-ISSf-controller-smooth}
        L_{\bf_1}h_1(\bx_1) + L_{\bg_1}h_1(\bx_1)\bk_1(\bx_1) \\ > - \theta_d\alpha(h_1(\bx_1)) + \tfrac{1}{\varepsilon}\|L_{\bw_1}h_1(\bx_1)\|^2,
\end{multline}
for all $\bx_1\in\mathcal{D}_1$. Note here that we have replaced the state-dependent $\theta$ with the value $\theta_{\rm{d}}$, which will facilitate the construction of smooth CBF-based controllers. General results regarding smoothness of CBF-based controllers are presented in \cite{CohenLCSS23}, and can be readily applied to design a smooth controller satisfying \eqref{eq:OD-ISSf-controller-smooth}, cf. \cite{CohenARC24,BahatiCDC25}.

With a smooth controller in place, we may extend the barrier condition from the top subsystem to the full system. This is done by penalizing the deviation of the virtual input $\bx_2$ from the desired safeguarding controller $\bk_1(\bx_1)$ as:
\begin{equation}\label{eq:h-backstepping}
    h(\bx) = h_1(\bx_1) - \tfrac{1}{2\mu}\|\bx_2 - \bk_1(\bx)\|^2,
\end{equation}
with a parameter $\mu>0$, as a candidate OD-ISSf-CBF on the set $\mathcal{D}\supset\mathcal{S}$ as defined in \eqref{eq:open-D}. Under a rank condition on $\bg_2$, this candidate is indeed an OD-ISSf-CBF.

\begin{theorem}
    Consider the strict-feedback system \eqref{eq:strict-feedback} and a constraint set $\mathcal{S}_1$ as in \eqref{eq:S1}. Suppose there exists a smooth controller $\bk_1\,:\,\mathcal{D}_1\rightarrow\R^{n_2}$ satisfying \eqref{eq:OD-ISSf-controller-smooth}, and construct $h$ as in~\eqref{eq:h-backstepping} with a set $\Dc\subseteq \Dc_1\times \R^{n_2}$ defined in~\eqref{eq:open-D}. If $\bg_2\,:\,\R^{n}\rightarrow\R^{n_2\times m}$ is full row rank for all $\bx\in\mathcal{D}\setminus\mathrm{Int}(\mathcal{S})$, then $h$ is an OD-ISSf-CBF for \eqref{eq:strict-feedback}.
\end{theorem}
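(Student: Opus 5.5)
We will reduce the claim to Proposition~\ref{prop:Lgh=0}, which says it suffices to show $L_{\bg}h(\bx)\neq\bzero$ on $\mathcal{D}\setminus\mathrm{Int}(\mathcal{S})$. Since $\bg(\bx)=[\bzero;\,\bg_2(\bx)]$, the first step is to compute
\begin{equation*}
    L_{\bg}h(\bx) = \nabla_{\bx_2}h(\bx)\,\bg_2(\bx) = -\tfrac{1}{\mu}\big(\bx_2-\bk_1(\bx_1)\big)\T\bg_2(\bx).
\end{equation*}
Here we use that $\bk_1$ depends only on $\bx_1$, so the $\bx_2$-gradient of the penalty term is simply $-\tfrac{1}{\mu}(\bx_2-\bk_1)\T$. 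Because $\bg_2(\bx)$ has full row rank on $\mathcal{D}\setminus\mathrm{Int}(\mathcal{S})$, we have $\bz\T\bg_2(\bx)=\bzero$ only if $\bz=\bzero$. Hence $L_{\bg}h(\bx)=\bzero$ at such points forces $\bx_2=\bk_1(\bx_1)$.

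If Proposition~\ref{prop:Lgh=0} does not apply directly, because $L_\bg h$ can vanish on the set where $\bx_2=\bk_1(\bx_1)$, we fall back on Lemma~\ref{lemma:OD-ISSf-verify}. We must show that at every $\bx\in\mathcal{D}$ with $h(\bx)\le 0$ and $\bx_2=\bk_1(\bx_1)$ the inequality \eqref{eq:OD-ISSf-CBF-verify} holds. At these points $h(\bx)=h_1(\bx_1)$, and the gradient of the penalty term vanishes entirely, since it is quadratic in $\bx_2-\bk_1$. Therefore
\begin{align*}
    \nabla h(\bx) &= \big(\nabla h_1(\bx_1),\,\bzero\big), \\
    L_{\bf}h(\bx) &= L_{\bf_1}h_1(\bx_1)+L_{\bg_1}h_1(\bx_1)\bk_1(\bx_1), \\
    L_{\bw}h(\bx) &= L_{\bw_1}h_1(\bx_1).
\end{align*}
In particular, the unmatched disturbance component $\bw_2$ drops out. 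Substituting these, the required inequality $L_{\bf}h+\theta_{\rm d}\alpha(h)>\tfrac1\varepsilon\|L_{\bw}h\|^2$ becomes exactly \eqref{eq:OD-ISSf-controller-smooth} evaluated at $\bx_1$. That inequality holds provided $\bx_1\in\mathcal{D}_1$, which follows from $\mathcal{D}\subseteq\mathcal{D}_1\times\R^{n_2}$.

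The step needing care is this bookkeeping at the critical points. Specifically, we must check that the penalty term contributes nothing to $L_{\bf}h$ and $L_{\bw}h$ there. The derivative of $\tfrac{1}{2\mu}\|\bx_2-\bk_1\|^2$ with respect to $\bx_1$ is $-\tfrac1\mu(\bx_2-\bk_1)\T\partial_{\bx_1}\bk_1$, which also vanishes when $\bx_2=\bk_1$. This is where smoothness of $\bk_1$ on $\mathcal{D}_1$ is used. We must also check that the same $\alpha$ and $\theta_{\rm d}$ are used in both conditions, and that the extended class-$\K$ domain of $\alpha$ fits the values of $h$ on $\mathcal{D}$.

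Combining the two cases, the points with $L_\bg h\neq\bzero$ and the points where $\bx_2=\bk_1(\bx_1)$, Lemma~\ref{lemma:OD-ISSf-verify} yields that $h$ is an OD-ISSf-CBF for \eqref{eq:strict-feedback} on $\mathcal{D}$.
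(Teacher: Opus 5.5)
Your proposal is correct and follows essentially the same argument as the paper. You compute $L_{\bg}h=-\tfrac{1}{\mu}(\bx_2-\bk_1(\bx_1))\T\bg_2(\bx)$, use full row rank to force $\bx_2=\bk_1(\bx_1)$ at critical points with $h\le 0$, and note that $\nabla h=(\nabla h_1,\bzero)$ there, so the condition in Lemma~\ref{lemma:OD-ISSf-verify} reduces exactly to \eqref{eq:OD-ISSf-controller-smooth}. The opening appeal to Proposition~\ref{prop:Lgh=0} is unnecessary, since $L_{\bg}h$ generally does vanish wherever $\bx_2=\bk_1(\bx_1)$ and $h_1(\bx_1)\le 0$; your fallback to Lemma~\ref{lemma:OD-ISSf-verify} is precisely the paper's route.
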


\begin{proof}
    To show that \eqref{eq:h-backstepping} is an OD-ISSf-CBF, we must show that \eqref{eq:OD-ISSf-CBF-verify} holds. Computing gradients of $h$ yields:
    \begin{equation}
        \begin{aligned}
            \pdv{h}{\bx_1}(\bx) = & \pdv{h_1}{\bx_1}(\bx) - \frac{1}{\mu}(\bx_2 - \bk_1(\bx_1))\T \pdv{\bk_1}{\bx_1}(\bx_1) \\ 
            \pdv{h}{\bx_2}(\bx) = & -\frac{1}{\mu}(\bx_2 - \bk_1(\bx_1))\T,
        \end{aligned}
    \end{equation}
    which implies that:
    \begin{equation}
        \begin{aligned}
            L_{\bg}h(\bx) = &
        \begin{bmatrix}
            \pdv{h}{\bx_1}(\bx) & \pdv{h}{\bx_2}(\bx)
        \end{bmatrix}
        \begin{bmatrix}
            \bzero \\ \bg_2(\bx)
        \end{bmatrix} \\
        = &
        -\frac{1}{\mu}(\bx_2 - \bk_1(\bx_1))\T\bg_2(\bx).
        \end{aligned}
    \end{equation}
    If $\bg_2$ has full row rank on $\mathcal{D}\setminus\mathrm{Int}(\mathcal{S})$ then:
    \begin{equation}
        L_{\bg}h(\bx) = \bzero \iff \bx_2 = \bk_1(\bx_1),
    \end{equation}
    for all $\bx\in\mathcal{D}\setminus\mathrm{Int}(\mathcal{S})$. 
    Hence, at any point where $L_{\bg}h(\bx) = \bzero$ and $h(\bx)\leq0$, we have $h(\bx)= h_1(\bx_1)$ and:
    \begin{equation}\label{eq:equivalent-lie-derivatives}
        \begin{aligned} 
            \pdv{h}{\bx}(\bx) = & 
            \begin{bmatrix}
            \pdv{h_1}{\bx_1}(\bx) & \bzero
            \end{bmatrix} ~\qquad \implies\\ 
            L_{\bf}h(\bx) = & \pdv{h}{\bx}(\bx)\bf(\bx) = L_{\bf_1}h_1(\bx_1) + L_{\bg_1}h_1(\bx_1)\bk_1(\bx), \\ 
            L_{\bw}h(\bx) = & \pdv{h}{\bx}(\bx)\bw(\bx) = L_{\bw_1}h_1(\bx_1).
        \end{aligned}
    \end{equation}
    Thus, any point where $L_{\bg}h(\bx)=\bzero$ and $h(\bx)\leq0$, we have:
    \begin{equation}
        \begin{aligned}
            L_{\bf}h(\bx) = & L_{\bf_1}h_1(\bx_1) + L_{\bg_1}h_1(\bx_1)\bk_1(\bx_1) \\ 
            % > & -\theta_d\alpha(h_1(\bx_1)) + \frac{1}{\varepsilon}\|L_{\bw_1}h_1(\bx_1)\|^2  \\ 
            % = & \theta_{\rm{d}}\alpha(h_1(\bx_1)) + \frac{1}{\varepsilon}\|L_{\bw_1}h_1(\bx_1)\|^2 \\
            > & -\theta_{\rm{d}}\alpha(h(\bx)) + \frac{1}{\varepsilon}\|L_{\bw}h(\bx)\|^2,
        \end{aligned}
    \end{equation}
    where the inequality follows from \eqref{eq:OD-ISSf-controller-smooth} and \eqref{eq:equivalent-lie-derivatives}. From Lemma \ref{lemma:OD-ISSf-verify},
    this implies
    \eqref{eq:h-backstepping} 
    is an OD-ISSf-CBF for \eqref{eq:strict-feedback}.
\end{proof}

Although the results here have been applied to a system with two layers, the same procedure may be recursively applied to systems with an arbitrary number of layers:
\begin{equation*}
    \begin{aligned}
        \dot{\bx}_1 = &  \bf_1(\bx_1) + \bg_1(\bx_1)\bx_2 + \bw_1(\bx_1)\bd \\ 
        \dot{\bx}_2 = &  \bf_2(\bx_{1},\bx_2) + \bg_2(\bx_{1},\bx_2)\bx_3 + \bw_2(\bx_{1},\bx_2)\bd \\ 
         & \vdots \\
        \dot{\bx}_{N} = & \bf_{N}(\bx) + \bg_{N}(\bx)\bu + \bw_{N}(\bx)\bd.
    \end{aligned}
\end{equation*}

We emphasize that in the strict-feedback setting, disturbances are generally unmatched with respect to the control input~$\bu$. Rather than requiring $\bw$ to lie in the span of $[\bzero, \cdots \bzero, \bg_{N}]^\top$, backstepping creates additional directions through $\bg_i$ by interpreting $\bx_i$ as virtual control inputs. 

\subsection{Dual Relative Degree Systems}
In this section, we extend our backstepping approach to a different class of systems, considered in \cite{BahatiCDC25} and referred to as \emph{dual relative degree (DRD) systems}. Here, we consider a disturbed version of these systems, which have dynamics:
\begin{equation}\label{eq:DRD-system}
    \begin{aligned}
        \dot{\bz} = & \bf_{\bz}(\bz) + \bg_{\bz}(\bz)\bpsi(\bfeta)\bu_{\bz} + \bw_{\bz}(\bz)\bd,\\ 
        \dot{\bfeta} = & \bf_{\bfeta}(\bfeta) + \bg_{\bfeta}(\bfeta)\bu_{\bfeta} + \bw_{\bfeta}(\bfeta)\bd,
    \end{aligned}
\end{equation}
where $\bz\in\R^{n_1}$ and $\bu_{\bz}\in\R^{m_1}$ denote the state and input of the top layer while  $\bfeta\in\R^{n_2}$ and $\bu_{\bfeta}\in\R^{m_2}$ denote the state and input of the bottom layer. The dynamics of the top layer are \emph{almost} decoupled from the bottom, except for that this layer's control directions are influenced by $\bfeta$ via the function $\bpsi\,:\,\R^{n_2}\rightarrow\R^{r\times m_1}$. These systems are reminiscent of differentially flat systems, such as quadrotors \cite{MellingerICRA2011}. 

\begin{example}\label{ex:quadrotor}
The dynamics of a planar quadrotor are:
\begin{equation}\label{eq:quadrotor-dyn}
    \underbrace{
    \begin{bmatrix}
        \dot{x} \\ \dot{y} \\ \dot{\theta} \\ \ddot{x} \\ \ddot{y} \\ \ddot{\theta}
    \end{bmatrix}}_{\dot{\bx}}
    =
    \underbrace{
    \begin{bmatrix}
        \dot{x} \\ \dot{y} \\ \dot{\theta} \\ 0 \\ -g \\ 0
    \end{bmatrix}}_{\bf(\bx)}
    +
    \underbrace{
    \begin{bmatrix}
        0 & 0 \\ 0 & 0 \\ 0 & 0 \\ -\frac{1}{m}\sin(\theta) & 0\\ \frac{1}{m}\cos(\theta) & 0 \\ 0 & \frac{1}{J}
    \end{bmatrix}}_{\bg(\bx)}
    \underbrace{
    \begin{bmatrix}
        T \\ M
    \end{bmatrix}}_{\bu},
\end{equation}
where the state $\bx\in\R^6$ consists of the position $(x,y)$ and orientation $\theta$ of the quadrotor along with their rates, and the control input is the total thrust $T$ and moment $M$ generated by the propellers. Here, $m$, $g$, and $J$ denote the mass, acceleration to due gravity, and the moment of inertia. This system is in the form of \eqref{eq:DRD-system} with $\bz=(x,y,\dot{x},\dot{y})\in\R^4$, $\bfeta=(\theta,\dot{\theta})\in\R^2$, $\bu_{\bz}=T\in\R$, $\bu_{\bfeta}=M\in\R$.~\hfill$\bullet$ 
\end{example}
Similar to Sec. \ref{sec:strict-feedback}, we consider a safety constraint defined on the top layer states $\bz$ of \eqref{eq:DRD-system} as:
\begin{equation}\label{eq:Sz}
    \mathcal{S}_{\bz} = \{\bz\in\R^{n_1}\mid h_{\bz}(\bz) \geq 0\}.
\end{equation}
In contrast to strict-feedback systems, where $\bfeta$ would be viewed as a virtual control input to the top layer, we view $\bpsi(\bfeta)\bu_{\bz} = \bv\in\R^r$
as the virtual input to the top layer. Using this virtual input, we design a smooth feedback controller $\bk_{\bv}\,:\,\mathcal{D}_{\bz}\rightarrow\R^r$, with $\mathcal{D}_{\bz}\supset\mathcal{S}_{\bz}$, for the top layer satisfying: 
\begin{multline}\label{eq:DRD-virtual-controller}
    L_{\bf_{\bz}}h_{\bz}(\bz) + L_{\bg_{\bz}}h_{\bz}(\bz)\bk_{\bv}(\bz)  \\
    > -\theta_{d}\alpha(h_{\bz}(\bz)) + \tfrac{1}{\varepsilon}\|L_{\bw_{\bz}}h_{\bz}(\bz)\|^2.
\end{multline}
This controller can be constructed using the techniques mentioned in Sec. \ref{sec:strict-feedback}.
To align $\bu_{\bz}$with the virtual input, assume that $\bpsi$ has full column rank for all $\bfeta$ and define:
\begin{equation}\label{eq:kz}
    \begin{aligned}
        \bk_{\bz}(\bz,\bfeta) \coloneqq \argmin_{\bu\in\R^{m_1}}\|\bk_{\bv}(\bz) - \bpsi(\bfeta)\bu\|^2 = \bpsi(\bfeta)^{\dagger}\bk_{\bv}(\bz),
    \end{aligned}
\end{equation}
where $\bpsi(\bfeta)^{\dagger}$ denotes the left pseudoinverse. Taking $\bu_{\bz}=\bk_{\bz}(\bz,\bfeta)$ produces the partial closed-loop dynamics:
\begin{equation}\label{eq:DRD-system-partial}
    \begin{aligned}
        \dot{\bz} = & \bf_{\bz}(\bz) + \bg_{\bz}(\bz)\bpsi(\bfeta)\bk_{\bz}(\bz,\bfeta) + \bw_{\bz}(\bz)\bd, \\ 
        \dot{\bfeta} = & \bf_{\bfeta}(\bfeta) + \bg_{\bfeta}(\bfeta)\bu_{\bfeta} + \bw_{\bfeta}(\bfeta)\bd,
    \end{aligned}
\end{equation}
and ensures that:
\begin{equation}
    \bpsi(\bfeta)\bu_z = \bpsi(\bfeta)\bk_{\bz}(\bz,\bfeta) = \bpsi(\bfeta)\bpsi(\bfeta)^{\dagger}\bk_{\bv}(\bz).
\end{equation}
To proceed, we require the following assumption.
\begin{assumption}\label{assumption:etad}
    Given the smooth feedback controller $\bk_{\bv}\,:\,\R^{n_1}\rightarrow\R^r$ satisfying \eqref{eq:DRD-virtual-controller} there exists a smooth function $\bfeta_{d}\,:\,\R^{r}\rightarrow\R^{n_2}$ such  for all $\bz\in\mathcal{D}_{\bz}$:
    \begin{equation}
        \bpsi(\bfeta_{\rm{d}}(\bk_{\bv}(\bz)))\bpsi(\bfeta_d(\bk_{\bv}(\bz)))^{\dagger}\bk_{\bv}(\bz) = \bk_{\bv}(\bz).
    \end{equation}
\end{assumption}
The above assumption states that given any virtual input $\bv=\bk_{\bv}(\bz)$ generated by \eqref{eq:DRD-virtual-controller}, there exists a state of the bottom layer $\bfeta_{{d}}$ that aligns the control directions of the top layer $\bpsi(\bfeta_{d})$ with this virtual input in the sense that:
\begin{equation}
    \bpsi(\bfeta_{d})\bk_{\bz}(\bz,\bfeta_{d}) = \bk_{\bv}(\bz).
\end{equation}
This function must be constructed on a system-by-system basis; however, for certain systems, the existence of $\bfeta_{d}$ is tightly linked to differential flatness properties. For example, for the quadrotor \eqref{eq:quadrotor-dyn}, this function is:
\begin{equation}\label{eq:etad-drone}
    \bfeta_{d}(\bv) = 
    \begin{bmatrix}
        \mathrm{atan}\left(\frac{-v_1}{v_2} \right) & \odv{}{t}\mathrm{atan}\left(\frac{-v_1}{v_2} \right)
    \end{bmatrix}\T,
\end{equation}
which converts desired acceleration commands, represented by $\bv=\bk_{\bv}(\bz)$, into desired orientations and angular rates of the quadrotor, cf. \cite{MellingerICRA2011}.
Provided that $\bfeta=\bfeta_d$, the derivative of $h_{\bz}$ along the top layer dynamics satisfies:
\begin{equation*}
    \begin{aligned}
        \dot{h}_{\bz} = & L_{\bf_{\bz}}h_{\bz}(\bz) 
        + L_{\bg_{\bz}}h_{\bz}(\bz)\bpsi(\bfeta_{d})\bk_{\bz}(\bz,\bfeta_{d}) + L_{\bw_{\bz}}h_{\bz}(\bz)\bd \\ 
        = & L_{\bf_{\bz}}h_{\bz}(\bz) 
        + L_{\bg_{\bz}}h_{\bz}(\bz)\bk_{\bv}(\bz) + L_{\bw_{\bz}}h_{\bz}(\bz)\bd \\
        > & -\theta_{d}\alpha(h_{\bz}(\bz)) + \tfrac{1}{\varepsilon}\|L_{\bw_{\bz}}h_{\bz}(\bz)\|^2 - \|L_{\bw_{\bz}}h_{\bz}(\bz)\|\|\bd\|_{\infty} \\ 
        > & -\theta_{d}\alpha(h_{\bz}(\bz)) - \frac{\varepsilon}{2}\|\bd\|_{\infty}^2,
    \end{aligned}
\end{equation*}
thereby ensuring ISSf of $\mathcal{S}_{\bz}$ as in the proof of Theorem \ref{theorem:OD-ISSf-safety}. Hence, if $\bfeta\rightarrow\bfeta_{d}$, one may expect to be able to establish some form of safety guarantees for the overall system \eqref{eq:DRD-system}. To formalize this idea, consider the Lyapunov-like function:
\begin{equation}
    V(\bz,\bfeta) = \tfrac{1}{2\mu}\|\bfeta - \bfeta_{d}(\bk_{\bv}(\bz))\|^2,
\end{equation}
which we use to propose the CBF candidate:
\begin{equation}\label{eq:DRD-CBF}
    h(\bz,\bfeta) = h_{\bz}(\bz) - V(\bz,\bfeta),
\end{equation}
for \eqref{eq:DRD-system}. The following theorem provides conditions under which \eqref{eq:DRD-CBF} is an OD-ISSf-CBF for \eqref{eq:DRD-CBF}. 

\begin{theorem}\label{thm:DRD_safety}
    Consider system \eqref{eq:DRD-system} and a constraint set $\mathcal{S}_{\bz}$ as in \eqref{eq:Sz}. Suppose there exists a smooth controller $\bk_{\bv}\,:\,\mathcal{D}_{\bz}\rightarrow\R^r$ satisfying \eqref{eq:DRD-virtual-controller} and a smooth function $\bfeta_{d}\,:\,\R^r\rightarrow\R^{n_2}$ satisfying Assumption \ref{assumption:etad}, and construct $h$ as in \eqref{eq:DRD-CBF} with a set $\mathcal{D}\subseteq\mathcal{D}_{\bz}\times\R^{n_2}$ as in \eqref{eq:open-D}. If $\bg_{\bfeta}\,:\,\R^{n_2}\rightarrow\R^{n_2\times m_2}$ is full row rank and $\bpsi\,:\,\R^{n_2}\rightarrow\R^{r\times m_1}$ is full column rank for all $\bfeta\in\R^{n_2}$, then $h$ is an OD-ISSf-CBF for \eqref{eq:DRD-system}.
\end{theorem}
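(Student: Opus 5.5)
We follow the template of the strict-feedback backstepping theorem and verify the condition of Lemma~\ref{lemma:OD-ISSf-verify}: at every $\bx=(\bz,\bfeta)\in\mathcal{D}$ with $h(\bx)\leq 0$ and $L_{\bg}h(\bx)=\bzero$ we must show $L_{\bf}h(\bx)+\theta_{\rm{d}}\alpha(h(\bx))>\tfrac{1}{\varepsilon}\|L_{\bw}h(\bx)\|^2$. The first step is to write \eqref{eq:DRD-system} in the form \eqref{sys:ctrl_affine_disturbed} with stacked input $\bu=(\bu_{\bz},\bu_{\bfeta})$. The input matrix is then
\[
\bg(\bx)=\begin{bmatrix}\bg_{\bz}(\bz)\bpsi(\bfeta) & \bzero\\ \bzero & \bg_{\bfeta}(\bfeta)\end{bmatrix},
\]
and the drift is $(\bf_{\bz},\bf_{\bfeta})$. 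Writing $\bfeta_d=\bfeta_d(\bk_{\bv}(\bz))$, the gradients are
\[
\pdv{h}{\bz}=\pdv{h_{\bz}}{\bz}+\tfrac{1}{\mu}(\bfeta-\bfeta_d)\T\pdv{\bfeta_d}{\bz},\qquad
\pdv{h}{\bfeta}=-\tfrac{1}{\mu}(\bfeta-\bfeta_d)\T .
\]

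The second step uses the $\bu_{\bfeta}$ block of $L_{\bg}h=\bzero$, namely $-\tfrac{1}{\mu}(\bfeta-\bfeta_d)\T\bg_{\bfeta}(\bfeta)=\bzero$. Since $\bg_{\bfeta}$ has full row rank, this forces $\bfeta=\bfeta_d(\bk_{\bv}(\bz))$. Consequently $V=0$, $h(\bx)=h_{\bz}(\bz)$, and $\pdv{h}{\bx}=\big[\pdv{h_{\bz}}{\bz}\;\;\bzero\big]$. This gives $L_{\bw}h(\bx)=L_{\bw_{\bz}}h_{\bz}(\bz)$ and $L_{\bf}h(\bx)=L_{\bf_{\bz}}h_{\bz}(\bz)$. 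It also shows that the constraint $h\leq 0$ becomes $h_{\bz}(\bz)\leq0$, and that $\bz$ lies in $\mathcal{D}_{\bz}$ because $\mathcal{D}\subseteq\mathcal{D}_{\bz}\times\R^{n_2}$.

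The main obstacle is the third step. Unlike the strict-feedback case, the drift $\bf$ does not contain the virtual input, so $L_{\bf}h$ alone lacks the term $L_{\bg_{\bz}}h_{\bz}\bk_{\bv}$ needed to invoke \eqref{eq:DRD-virtual-controller}. That term has to come from the remaining block of $L_{\bg}h=\bzero$, the $\bu_{\bz}$ component, which at such points reads $L_{\bg_{\bz}}h_{\bz}(\bz)\bpsi(\bfeta_d)=\bzero$. We would handle this with Assumption~\ref{assumption:etad} and full column rank of $\bpsi$, which together give $\bk_{\bv}(\bz)=\bpsi(\bfeta_d)\bpsi(\bfeta_d)^{\dagger}\bk_{\bv}(\bz)$. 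Hence
\[
L_{\bg_{\bz}}h_{\bz}(\bz)\bk_{\bv}(\bz)=\big(L_{\bg_{\bz}}h_{\bz}(\bz)\bpsi(\bfeta_d)\big)\bpsi(\bfeta_d)^{\dagger}\bk_{\bv}(\bz)=0 .
\]
Because this term vanishes, we may add it for free: $L_{\bf}h(\bx)=L_{\bf_{\bz}}h_{\bz}(\bz)+L_{\bg_{\bz}}h_{\bz}(\bz)\bk_{\bv}(\bz)$. Applying \eqref{eq:DRD-virtual-controller} then yields $L_{\bf}h(\bx)>-\theta_{\rm{d}}\alpha(h(\bx))+\tfrac{1}{\varepsilon}\|L_{\bw}h(\bx)\|^2$.

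The final step is to conclude from Lemma~\ref{lemma:OD-ISSf-verify} that $h$ is an OD-ISSf-CBF. We would also note that full column rank of $\bpsi$ makes $\bpsi^{\dagger}$, and hence $h$, smooth, so the continuous differentiability required in the definition holds.
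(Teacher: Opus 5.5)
Your proof is correct, but it takes a different route from the paper.

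\emph{The paper's route.} It does not verify Lemma~\ref{lemma:OD-ISSf-verify} on the full stacked system. It first fixes $\bu_{\bz}=\bk_{\bz}(\bz,\bfeta)=\bpsi(\bfeta)^{\dagger}\bk_{\bv}(\bz)$. This absorbs $\bg_{\bz}(\bz)\bpsi(\bfeta)\bk_{\bz}(\bz,\bfeta)$ into the drift $\bf_p$ of the partial closed-loop system \eqref{eq:DRD-system-partial}, whose only input matrix is $\bg_p=[\bzero;\bg_{\bfeta}]$. Then $L_{\bg_p}h=\bzero$ alone gives $\bfeta=\bfeta_d$. Assumption~\ref{assumption:etad} is used in its alignment form $\bpsi(\bfeta_d)\bk_{\bz}(\bz,\bfeta_d)=\bk_{\bv}(\bz)$, which makes the virtual-input term appear inside $L_{\bf_p}h$. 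Finally, the conclusion is lifted back to \eqref{eq:DRD-system}, since a particular $\bu_{\bz}$ is admissible in the supremum.

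\emph{Your route.} You keep $\bu_{\bz}$ free and use the second block of $L_{\bg}h=\bzero$, namely $L_{\bg_{\bz}}h_{\bz}(\bz)\bpsi(\bfeta_d)=\bzero$. Combined with Assumption~\ref{assumption:etad} read as $\bk_{\bv}(\bz)\in\mathrm{range}\,\bpsi(\bfeta_d)$, this makes the missing term $L_{\bg_{\bz}}h_{\bz}\bk_{\bv}$ vanish, so it can be added for free.

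\emph{What each buys.} Yours is a single, self-contained application of Lemma~\ref{lemma:OD-ISSf-verify} with no lifting step, and it shows that the range condition is the only thing that matters. The paper's version yields more. With $\bu_{\bz}=\bk_{\bz}$ frozen, the barrier inequality holds at every point with $\bfeta=\bfeta_d(\bk_{\bv}(\bz))$, not only where the $\bu_{\bz}$-block of $L_{\bg}h$ also vanishes. So $h$ is an OD-ISSf-CBF even for the partial closed-loop system in which only $\bu_{\bfeta}$ is designed. That is what certifies the cascaded controller $\bu_{\bz}=\bk_{\bz}$, $\bu_{\bfeta}=\bk_{\bfeta}$ used in the quadrotor simulations. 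Your argument by itself only guarantees feasibility when $(\bu_{\bz},\bu_{\bfeta})$ are chosen jointly.

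One minor slip: $h$ in \eqref{eq:DRD-CBF} does not involve $\bpsi^{\dagger}$. Its continuous differentiability comes from $h_{\bz}$ being $C^1$ and from $\bk_{\bv}$ and $\bfeta_d$ being smooth, not from the rank of $\bpsi$.
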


\begin{proof}
    We begin by computing the gradients of 
    % $V$ as:
    % \begin{equation*}
    %     \begin{aligned}
    %         \pdv{V}{\bfeta}(\bfeta,\bz) = & \frac{1}{\mu}(\bfeta - \bfeta_{d}(\bk_{\bv}(\bz)))\T \\ 
    %         \pdv{V}{\bz}(\bfeta,\bz) = & -\frac{1}{\mu}(\bfeta - \bfeta_{d}(\bk_{\bv}(\bz)))\T\pdv{\bfeta_{d}}{\bv}(\bk_{\bv}(\bz))\pdv{\bk_{\bz}}{\bz}(\bz), \\ 
    %     \end{aligned}
    % \end{equation*}
    % and those of 
    $h$ as:
    \begin{equation*}
        \begin{aligned}
            \pdv{h}{\boldeta}(\bz,\bfeta) = & -\frac{1}{\mu}(\bfeta - \bfeta_{d}(\bk_{\bv}(\bz)))\T \\ 
            \pdv{h}{\bz}(\bz,\bfeta) = & \pdv{h_{\bz}}{\bz}(\bz) \\ + & \frac{1}{\mu}(\bfeta - \bfeta_{d}(\bk_{\bv}(\bz)))\T\pdv{\bfeta_{d}}{\bv}(\bk_{\bv}(\bz))\pdv{\bk_{\bz}}{\bz}(\bz).
        \end{aligned}
    \end{equation*}
    Set $\bu_{\bz}=\bk_{\bz}(\bz,\bfeta)$ as in \eqref{eq:kz}, which exists when $\bpsi$ has full column rank, producing the partial closed-loop system as in \eqref{eq:DRD-system-partial}. Computing the Lie derivative of $h$ along the control directions of \eqref{eq:DRD-system-partial}, denoted by $\bg_{p}$, yields:
    \begin{equation*}
        \begin{aligned}
            L_{\bg_p}h(\bz,\bfeta) = & 
            \begin{bmatrix}
                \pdv{h}{\bz}(\bz,\bfeta) & \pdv{h}{\boldeta}(\bz,\bfeta)
            \end{bmatrix}
            \begin{bmatrix}
                \bzero \\ \bg_{\bfeta}(\bfeta)
            \end{bmatrix} \\ 
            = & -\frac{1}{\mu}(\bfeta - \bfeta_{d}(\bk_{\bv}(\bz)))\T\bg_{\bfeta}(\bfeta).
        \end{aligned}
    \end{equation*}
    If $\bg_{\bfeta}$ has full row rank, then when $L_{\bg_p}h(\bz,\bfeta)=\bzero$, we necessarily have
    % :
    % \begin{equation}\label{eq:eta=etad}
    %     \bfeta = \bfeta_{d}(\bk_{\bv}(\bz)).
    % \end{equation}
    $\bfeta = \bfeta_{d}(\bk_{\bv}(\bz))$.
    At such points, $\pdv{h}{\bfeta}(\bz,\bfeta)=\bzero$ and $\pdv{h}{\bz}(\bz,\bfeta)=\pdv{h_{\bz}}{\bz}(\bz)$, implying that the Lie derivatives along the drift dynamics of \eqref{eq:DRD-system-partial}, denoted by $\bf_{p}$, are:
    \begin{equation}\label{eq:Lfph-bound1}
        \begin{aligned}
            L_{\bf_p}h(\bz,\bfeta) = & L_{\bf_{\bz}}h_{\bz}(\bz) + L_{\bg_{\bz}}h_{\bz}(\bz)\bpsi(\bfeta)\bk_{\bz}(\bz, \bfeta) \\
            = & L_{\bf_{\bz}}h_{\bz}(\bz) + L_{\bg_{\bz}}h_{\bz}(\bz)\bpsi(\bfeta_d)\bk_{\bz}(\bz, \bfeta_d) \\ 
            = & L_{\bf_{\bz}}h_{\bz}(\bz) + L_{\bg_{\bz}}h_{\bz}(\bz)\bk_{\bv}(\bz) \\ 
            > & -\theta_{d}\alpha(h_{\bz}(\bz)) + \tfrac{1}{\varepsilon}\|L_{\bw_{\bz}}h_{\bz}(\bz)\|^2,
        \end{aligned}
    \end{equation}
    where the second line follows from $\bfeta = \bfeta_{d}(\bk_{\bv}(\bz))$, the third from Assumption \ref{assumption:etad}, and the fourth from \eqref{eq:DRD-virtual-controller}.
    After noting that when $L_{\bg_p}h(\bz,\bfeta)=\bzero$ we must have $h(\bz,\bfeta) = h_{\bz}(\bz)$ and $L_{\bw_p}h(\bz,\bfeta) = L_{\bw_{\bz}}h_{\bz}(\bz)$,
    we may express \eqref{eq:Lfph-bound1} as:
    \begin{equation}
        \begin{aligned}
            L_{\bf_p}h(\bz,\bfeta) > & -\theta_{d}\alpha(h(\bz,\bfeta)) + \tfrac{1}{\varepsilon}\|L_{\bw_p}h(\bz,\bfeta)\|^2.
        \end{aligned}
    \end{equation}
    From Lemma \ref{lemma:OD-ISSf-verify}, the above is precisely the statement that $h$ as in  \eqref{eq:DRD-CBF} is an OD-ISSf-CBF for the \emph{partial} closed-loop system \eqref{eq:DRD-system-partial}. Consequently, there exists a locally Lipschitz feedback controller $\bk_{\bfeta}(\bz,\bfeta)$ enforcing the OD-ISSf-CBF condition on \eqref{eq:DRD-system-partial}. Since this partial closed-loop system was obtained by taking the particular choice of input $\bu_{\bz}=\bk_{\bz}(\bz,\bfeta)$, this also implies the the existence of inputs for the original DRD-system \eqref{eq:DRD-system} satisfying the associated OD-ISSf-CBF condition for $h$. Hence, $h$  is an OD-ISSf-CBF for \eqref{eq:DRD-system}.
\end{proof}

\begin{remark}\label{remark:DRD}
    Strict-feedback and DRD systems can
    % in principle, 
    be combined and extended to yield a richer class of systems amenable to 
    % ISSf-CBF 
    backstepping. 
    For instance, although the quadrotor 
    % system 
    in Ex.~\ref{ex:quadrotor} is DRD, it does not satisfy the full row-rank condition required in Thm.~\ref{thm:DRD_safety}. However, when viewed as a multi-layer hybrid of strict-feedback and DRD, it does, and backstepping can be applied. The development of a unified framework accommodating multiple layers of both structures is omitted for brevity and will appear elsewhere.~\hfill$\bullet$
\end{remark}

\section{Examples}
% \subsection{Inverted Pendulum}
\noindent\textbf{Inverted Pendulum:}
To illustrate the developed techniques, we
% begin by 
consider an inverted pendulum with disturbances:
\begin{equation}\label{eq:pendulum}
    \dot{\bx} = 
    \begin{bmatrix}
        \dot{q} \\ \frac{g}{l}\sin(q) - \beta\dot{q}
    \end{bmatrix}
    +
    \begin{bmatrix}
        0 \\ \frac{1}{ml^2}
    \end{bmatrix}
    \bu 
    +
    \begin{bmatrix}
        \nu \\ \frac{1}{ml^2}
    \end{bmatrix}
    \bd,
\end{equation}
where $\bx=(q,\dot{q})\in\R^2$ consists of the pendulum's angle and angular velocity, $\bu\in\R$ is the control torque, $\bd\in\R$ is a disturbance input, and $g,l,m,\beta,\nu$ are physical parameters.
Our main objective is to design a feedback controller that ensures $q$ satisfies $|q| \leq 1$, 
despite disturbances. The disturbance $\bd$ is unmatched, but
the dynamics in \eqref{eq:pendulum} are in strict-feedback form,
allowing us to leverage backstepping to construct a CBF. At the top layer dynamics, the disturbance $\bd$ is matched since $\bw_1(q)=\nu\bg_1(q)=\nu$. The function $h_1(q) = 1 - q^2$
encodes the desired safety constraint and defines a set $\mathcal{S}_1$ as in \eqref{eq:S1}. This function satisfies $L_{\bg_1}h_1(\bx_1)=-2q$, which only vanishes inside $\mathcal{S}_1$, implying that $h_1$ is an OD-ISSf-CBF by Prop. \ref{prop:Lgh=0}. This allows for designing a smooth controller satisfying \eqref{eq:OD-ISSf-controller-smooth}, which we accomplish using the ``Half-Sontag" formula from \cite{CohenLCSS23}, and leads to the OD-ISSf-CBF from \eqref{eq:h-backstepping}. The resulting safe set in shown in Fig. \ref{fig:pendulum-expansion} (right) along with the expanded safe set for different values of $\delta$. Using this OD-ISSf-CBF to construct a QP based controller as in \eqref{eq:OD-ISSf-CBF-QP} for different values of $\varepsilon$ leads to the results in Fig. \ref{fig:pendulum-expansion} (left),  where this controller filters a nominal tracking controller and the system is subject to the disturbance signal $\bd(t) = \sin(t)$. For larger values of $\varepsilon$, this disturbance causes the trajectory to leave the safe set, but remain in bounded neighborhood of the set in accordance with Theorem \ref{theorem:OD-ISSf-safety}, whereas smaller values of $\varepsilon$ robustify the system to unmatched uncertainties and cause the system to evolve in $\mathcal{S}$.

\begin{figure}
    \centering \includegraphics{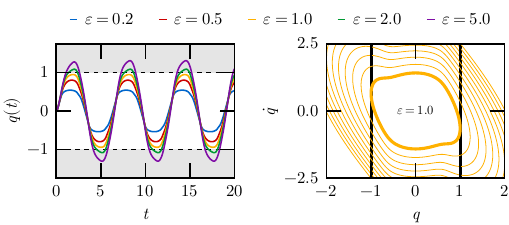}
    \vspace{-8mm}
    \caption{\textbf{Left}: Evolution of the pendulum's position under an ISSf safety filter, where the dashed black lines dente the cosntraint boundary. \textbf{Right}: Expansions of the pendulum's safe set using a virtual controller with $\varepsilon=1$. Here, the thick yellow curve corresponds to the boundary of $\mathcal{S}$ whereas the thinner curves corresponds to the boundary of $\mathcal{S}_{\delta}$ for different values of $\delta$.}
    \label{fig:pendulum-expansion}
    % \vspace{-1.0cm}
\end{figure}

% \subsection{Planar Quadrotor}
\noindent\textbf{Planar Quadrotor:}
To illustrate our results for DRD systems, consider a disturbed version of the planar quadrotor from  Ex. \ref{ex:quadrotor} with $\bd\in\R^2$ representing unknown wind gusts acting on the quadrotor's center of mass. This represents an unmatched uncertainty as such disturbances can only be directly canceled if the thrust vector  $\bpsi(\theta)=(-\sin(\theta),\cos(\theta))$ is aligned with $\bd$.  Keeping in line with
% the conditions of 
Thm. \ref{thm:DRD_safety}, we leverage a simplified version of the quadrotor, where the angular velocity $\dot{\theta}=\omega$ is viewed directly as a control input, yielding $\bfeta=\theta$, $\bu_{\bfeta}=\omega$, $\bf_{\bfeta}(\bfeta)=0$, and $\bg_{\bfeta}(\bfeta)=1$ in \eqref{eq:DRD-system}, which has full row rank\footnote{As noted in Remark \ref{remark:DRD} one could then backstep from $\omega$ to the input $M$ from \eqref{eq:quadrotor-dyn}, but such a presentation is omitted here in the interest of space.}. The dynamics of the $\bz$ subsystem are as in Ex. \ref{ex:quadrotor} with $\bf_{\bz}(\bz)=(\dot{x},\dot{z},0,-g)\in\R^4$, $\bg_{\bz}(\bz) = [\bzero; \tfrac{1}{m}\bI]\in\R^{4\times 2}$ and $\bw_{\bz}=\bg_{\bz}$. Our objective is to design a controller ensuring this system satisfies 
% the state constraint 
$x_{\max} \geq x$, representing a wall located at $x=x_{\max}$. We convert this constraint into a relative degree one function using high-order CBFs \cite{WeiTAC22,TanTAC22} to produce a function $h_{\bz}$ defining $\mathcal{S}_{\bz}$ as in \eqref{eq:Sz}. This is used to design a smooth virtual controller $\bk_{\bv}$ satisfying \eqref{eq:DRD-virtual-controller} using the Half-Sontag formula as in the previous subsection. We construct $\bk_{\bz}$ as in \eqref{eq:kz} and take $\bfeta_{d}$ as the first component of \eqref{eq:etad-drone}. Putting these together leads to the CBF \eqref{eq:DRD-CBF}, which satisfies the conditions of Thm. \ref{thm:DRD_safety}.

Simulations 
% of this system 
are performed using $\bu_{\bz}=\bk_{\bz}$ and $\bu_{\bfeta}=\bk_{\bfeta}$ as the controller satisfying the CBF condition for the partial closed-loop system \eqref{eq:DRD-system-partial}, whose safety properties are established in Thm. \ref{thm:DRD_safety}. Fig. \ref{fig:drone-sim1} illustrates the evolution of the system for different values of $\varepsilon$ and the disturbance $\bd=(1,0)$ pushing \emph{directly} toward unsafe states. Higher values of $\varepsilon$ lead to safety violations whereas lower values robustify the system to disturbances and ensure safety. As the drone approaches the boundary of the safe set, it orients itself (Fig. \ref{fig:drone-sim1}, right) so that its thrust is pointing away from the constraint boundary, allowing it to counteract disturbances that could force it to leave the safe set. Further simulations are shown in Fig. \ref{fig:drone-sim2} for different disturbance inputs using $\varepsilon=1$, which ensures safety of the system.

\begin{figure}
    \centering
    \includegraphics{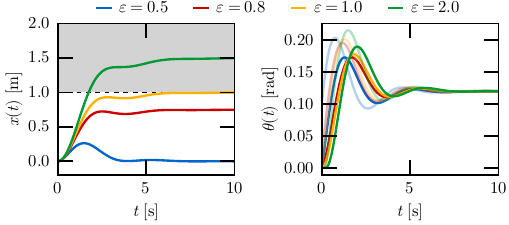}
    \vspace{-8mm}
    \caption{\textbf{Left}: Evolution of the drone's $x$ position under the DRD-CBF-based controller for different values of $\varepsilon$ and disturbance input $\bd=(1,0)$, where the gray region denotes unsafe states. \textbf{Right}: Evolution of the drone's orientation under the DRD-CBF-based controller for different values of $\varepsilon$, where the transparent lines denote the values of $\bfeta_{d}$.}
    \label{fig:drone-sim1}
\end{figure}

\begin{figure}
    \centering
    \includegraphics{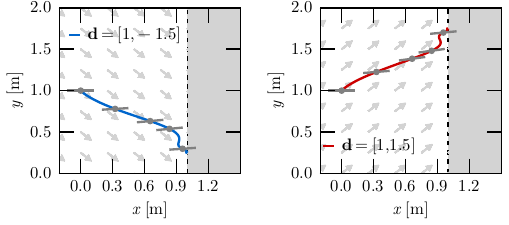}
    \vspace{-8mm}
    \caption{Evolution of the drone's $(x,y)$ position under different disturbances, where the gray arrows indicate the direction of the disturbance.}
    \label{fig:drone-sim2}
    \vspace{-0.7cm}
\end{figure}

\section{Conclusions}\label{sec:conclusions}
We presented a constructive framework for robust  safety-critical control of systems with unmatched disturbances. By extending ISSf via OD-CBFs and introducing a backstepping-based synthesis procedure, we relaxed verification requirements and enabled ISSf guarantees for broad system classes. Future work will include a full characterization of systems combining strict-feedback and DRD structure.

\bibliographystyle{ieeetr}
\bibliography{biblio}

\end{document}